\newtheorem{theorem}{Theorem}
\newtheorem{lemma}{Lemma}
\newtheorem{corollary}{Corollary}
\begin{document}

\title{Constructing entanglement measures for fermions}

\author{Markus Johansson$^{1}$ and Zahra Raissi$^{1,2}$} 
\affiliation{$^{1}$ICFO-Institut de Ciencies Fotoniques, The Barcelona Institute of Science and Technology,
08860 Castelldefels (Barcelona), Spain\nonumber\\
$^{2}$Department of Physics, Sharif University of Technology,
Tehran, P.O. Box 111555-9161, Iran}

\date{\today}

\begin{abstract}
In this paper we describe a method for finding polynomial invariants under Stochastic Local Operations and Classical Communication (SLOCC), for a system of delocalized fermions shared between different parties, with global particle number conservation as the only constraint. These invariants can be used to construct entanglement measures for different types of entanglement in such a system. It is shown that the invariants, and the measures constructed from them, take a nonzero value only if the state of the system allows for the observation of Bell-nonlocal correlations. Invariants of this kind are constructed for systems of two and three spin-$1/2$ fermions and examples of maximally entangled states are given that illustrate the different types of entanglement distinguished by the invariants.  
A general condition for the existence of SLOCC invariants and their associated measures is given as a relation between the number of fermions, their spin, and the number of spatial modes of the system. In addition, the effect of further constraints on the system, including the localization of a subset of the fermions, is discussed.
 Finally, a hybrid Ising-Hubbard Hamiltonian is constructed for which the groundstate of a three site chain exhibits a high degree of entanglement at the transition between a regime dominated by on-site interaction and a regime dominated by Ising-interaction. This entanglement is well described by a measure constructed by the introduced method.
\end{abstract}

\pacs{03.67.Mn, 03.65.Ud, 05.30.Fk, 05.30.Rt}

\maketitle

\section{Introduction}

The phenomenon of entanglement in quantum mechanics is in its essence the presence of action at a distance \cite{epr,Bell}. The operational verification of entanglement thus consist of performing a task that is impossible without nonlocal causation such as for example quantum teleportation \cite{bennett1} or producing Bell-nonlocal correlations \cite{Bell}.

In this paper we study the entanglement in a system of indistinguishable fermions where measurements on the system are performed by spatially separated parties, i.e., experimenters. We assume that the number of fermions is conserved, i.e., that there is a global particle number superlselection rule. Furthermore, we assume that the spatial positions of the fermions are not well defined, i.e., that they are delocalized.

If there is no restriction on the local operations that a party can perform, the local degrees of freedom behave effectively as the degrees of freedom of a localized distinguishable particle.
Therefore, the entanglement properties can be qualitatively different, compared to the case of distinguishable particles, only if the allowed local operations are restricted.
Global particle number conservation can impose such a restriction, but only if at least some of the particles are delocalized, so that the local particle numbers are not fixed. For this reason we consider delocalization of the fermions together with the global particle number conservation.
Another possible restriction on the local operations is particle number parity conservation, which has been considered in Ref. \cite{banuls}.

The entanglement between distinguishable particles without superselection rules has been extensively studied and characterized in a variety of ways for small numbers of particles. See e.g. Refs.  \cite{mahler,wootters,popescu2,coffman,dur,carteret,sudbery,acin,acin2,luquethibon1,verstraete2002,
Osterlohsiewert2005,gour,kraus}. 
Several tools have been developed to describe the entanglement in this case, such as the subsystem entropy \cite{vonneumann,horodecki} and polynomial invariants under local operations \cite{linden,grassl,wootters,popescu2,coffman,sudbery,luquethibon1,Osterlohsiewert2005}. For pure states of such a system the description by these tools has a clear interpretation in terms of the ability to demonstrate nonlocal causation. If the subsystem entropy or the value of a polynomial invariant is nonzero, there exist some local operations that demonstrate nonlocal causation \cite{gisin,popescu3}.

The entanglement of indistinguishable particles has been less explored and the tools to describe it have been less developed. In this case, using the tools created for the case of distinguishable particles can lead to a description that does not carry the same physical meaning.
For example, in a system where the set of allowed operations on the local Hilbert space is restricted, a nonzero entropy of a subsystem or a nonzero
value of an invariant under local operations does not necessarily imply that nonlocal causation can be demonstrated. 

This work describes the group of Stochastic Local Operations and Classical Communication (SLOCC) \cite{bennett} for a system of delocalized indistinguishable spin-$\frac{1}{2}$ fermions with a global particle number superselection rule, and gives a method for constructing polynomial invariants under its action. 
Concrete examples of such invariants are given for systems of two and three spin-$\frac{1}{2}$ fermions with access to a number of spatial modes. 

Furthermore we show that if particle number conservation is the only constraint of the system, a nonzero value of an SLOCC invariant implies that there exist one or more bipartitions of the system for which Bell-nonlocal correlations can be observed. This result extends to fermions with arbitrary spin. We also give a necessary and sufficient condition for the existence of SLOCC invariants as a relation between the number of fermions, their spin, and the number of spatial modes.

For the case of two spin-$\frac{1}{2}$ fermions we compare the constructed SLOCC invariant, and the entanglement measure that is obtained from it, to the subsystem entropy and the fermionic concurrence \cite{mcdonalds}, two other tools that have been used to characterize entanglement and correlations in fermionic systems. It is shown that neither the subsystem entropy nor the fermionic concurrence have a clear relation to Bell-nonlocality.

Briefly we also consider a few cases with additional constraints, such as fixing the spatial location of a subset of the fermions or imposing strong repulsive or attractive interaction.
For the case of strongly attractive interaction the relation between SLOCC invariants and Bell-nonlocality that exists in the unconstrained case is no longer valid.

Finally, as an example we consider a hybrid Ising-Hubbard Hamiltonian where the groundstate entanglement is well described by a measure constructed from a SLOCC invariant. The peak of this groundstate entanglement occurs at a transition between two qualitatively different physical regimes, one dominated by on-site interaction and one dominated by nearest neighbour Ising-interaction.

The outline of the paper is the following.
In Sec. \ref{sec2} the methods for characterizing entanglement and Bell-nonlocality in a system with superselection rules are described. The focus is on the particle conservation superselection rule, and it is described how to construct the group of SLOCC. Section \ref{sec3} gives the polynomial SLOCC invariants for two and three delocalized spin-$\frac{1}{2}$ fermions and discusses their physical meaning. In addition to this two conditions for the existence of SLOCC invariants for arbitrary numbers of fermions and arbitrary spin are given. Section \ref{sec4} contains a brief discussion of SLOCC invariants and measures for cases with additional constraints and a few examples are given. 
In Section \ref{sec5} the Ising-Hubbard Hamiltonian with a groundstate that exhibits entanglement between three fermions at a transition between different physical regimes is described. The paper ends with the conclusions.

\section{Entanglement and Bell-nonlocality of fermions}
\label{sec2}

The setting considered in this paper is a system containing a number of indistinguishable fermions. By fermions we mean particles that satisfy the Pauli exclusion principle, i.e., any two particles with the same internal state cannot exist in the same spatial location. This constraint is equivalent to the requirement that the state vector acquires a minus sign when two particles are interchanged. In particular we focus on the case where each fermion has two internal degrees of freedom. We therefore use the example where the internal degree of freedom is spin and we denote the internal states by $|\!\uparrow\rangle$ and $|\!\downarrow\rangle$, but the results of the paper are independent of the physical nature of the two internal degrees of freedom. Beyond this we assume that the global particle number is conserved.

We also assume that the fermions have access to a finite number of spatial modes and that at least some of the fermions do not have well defined locations, i.e., that they are delocalized. To each spatial mode we associate a party, i.e., an experimenter, capable of performing measurements on any fermion or fermions that occupy its spatial mode. For some purposes we also consider parties with access to more than one spatial mode.

When the locations of the fermions are not well defined it is not operationally meaningful to consider entanglement between degrees of freedom of the individual fermions. Therefore we instead consider entanglement between the degrees of freedom accessible to the different parties \cite{knill,zanardi2,zanardi,fisher,zanardi3,barnumviola}, i.e., entanglement between the modes of the system \cite{zanardi,fisher}. The different parts of the system that can be entangled are therefore defined by the disjoint subsets of operations that the different parties can perform \cite{knill,zanardi2,zanardi3,barnumviola}. In our case what restricts the allowed operations of a party to one of these subsets is the spatial separation of the parties, but more general cases have been considered in Ref. \cite{barnumviola}. 
Entanglement between modes in a system of fermions has been considered previously in Refs. \cite{zanardi,johannesson,banuls} for the case with a particle number parity superselection rule.

Correlations between fermions have been studied also in the terms of conserved quantities in systems of non-interacting fermions \cite{shliemancirackus,mcdonalds,vrana} in particular the Slater rank \cite{shliemancirackus} and its generalizations \cite{vrana}.
The scenario where these quantities are physically relevant is operationally different from the scenario considered here. In particular, it does not involve spatially separated parties performing local operations, but instead considers a system under the restricted set of global operations that do not cause interaction between the fermions.

\subsection{Entanglement and Bell-nonlocality in a system with a superselection rule}

\label{secoo}

The concept of entanglement presupposes a division of the system into parts corresponding to different parties. If we consider the case of three parties A, B, and C we can denote a tri-partitioning of the system by A|B|C. For a partitioning of this type it is assumed that a given party, e.g. party A, can act only on a subset of the degrees of freedom of the system that are locally accessible to A. These degrees of freedom are described by the local Hilbert space $H_A$. For a different partitioning of the system, e.g. a bi-partitioning A|BC, it is assumed that the degrees of freedom of both mode B and C are accessible to a party who can perform operations on the Hilbert space $H_B\otimes H_C$. The full Hilbert space of the shared system is described by the tensor product $H=H_A\otimes H_{B}\otimes H_{C}$ of the individual local Hilbert spaces.
In the case of distinguishable particles and without restrictions on the local operations, a state, represented by a density matrix $\rho$, is entangled with respect to a partitioning if it cannot be expressed as a {\it separable} state. For example, a three-partite state is separable if it can be expressed as

\begin{eqnarray}\label{sep}\rho=\sum_{k}p_k\rho_A^k\otimes\rho_B^k\otimes\rho_C^k,\end{eqnarray} 
where $\rho_X^k$ is a density matrix on $H_{X}$, $p_k>0$, and $\sum_k p_k=1$.
It is clear that the notion of separability depends on the partitioning of the system.
A state that is non-separable for the partition A|B|C may still be separable for the partition A|BC. However, a state which is separable for the most fine grained partitioning, in this case A|B|C, is separable for any more coarse grained partitioning, such as the bi-partitioning A|BC.

In a system with a conserved quantity such as a charge, a particle number, or a parity, the set of physically allowed operations is restricted to those that do not change the given quantity. Such a restriction is called a {\it super-selection rule} and imposes a restriction on the local operations that the parties can perform. See e.g. Ref. \cite{moretti} for an introduction to superselection rules.

For example, when the total value of the conserved quantity is the sum of local values, as is the case with the global particle number, the allowed local operations must conserve these local values. In this case it is possible to decompose the Hilbert space $H$ of a system with $n$ spatial modes as a direct sum $H=\oplus_{q_1,q_2,\dots,q_n} H_{q_1,q_2,\dots,q_n}$ where $H_{q_1,q_2,\dots,q_n}$ is the Hilbert space of the states with local conserved quantities $q_1,q_2,\dots,q_n$ of the $n$ spatial modes. The physically allowed local operations cannot transform between the different sectors defined by the different $H_{q_1,q_2,\dots,q_n}$. In particular, a local operation $g$ on the part of the system with local conserved quantity $q_1$ can not change the value of $q_1$. Any such allowed local operation can therefore be given a block-diagonal matrix form where each block is a transformation on a subspace of the local Hilbert space corresponding to a given possible value of $q_1$, i.e.,

\begin{eqnarray}\label{block}g=\left( \begin{array}{cccccc}
\! {\bf B}_{q_1=r} & \! 0 & 0 & \cdots & 0 & 0\\
\! 0 & {\bf B}_{q_1=r-1} & 0 & \cdots & 0 & 0\\
\! 0 & 0 &  \ddots &  & \vdots & \vdots \\
\! \vdots & \vdots & & \ddots &  0 & 0\\
\! 0 &  0   & \cdots & 0 & {\bf B}_{q_1=1} & 0\\
\! 0 & 0  & \cdots & 0 & 0 & {\bf B}_{q_1=0}\end{array} \!\right),
\end{eqnarray}
where ${\bf B}_{q_1=i}$ is a block containing an operation that preserves the value $q_1=i$ of the local conserved quantity, and $r$ is the maximum possible value of $q_1$. 
This restriction to block diagonal form of the local operations implies that
for two vectors $|\psi\rangle$ and $|\phi\rangle$ that belong to different sectors, i.e., that have different values of the local conserved quantities,
it holds that $\langle\psi|Q|\phi\rangle=0$ for any local operation $Q$.
Therefore, if a state is a coherent superposition $\frac{1}{\sqrt{2}}(|\psi\rangle+|\phi\rangle)$ of two vectors $|\psi\rangle$ and $|\phi\rangle$ it cannot be distinguished from the incoherent mixture
$\frac{1}{2}|\psi\rangle\langle\psi|+\frac{1}{2}|\phi\rangle\langle\phi|$ by any local operations. 

A super-selection rule of this type therefore limits the ability of the parties to distinguish different states. In particular it limits the ability to distinguish non-separable from separable states. More precisely, for a non-separable state $\rho$ there may exist a separable state $\rho_s$ such that all physically allowed observables by parties A, B, and C have the same expectation values. Therefore, we say that $\rho$ is indistinguishable from the separable state $\rho_s$ with respect to the given partitioning A|B|C if for any local measurements $Q_{A},Q_{B},$ and $Q_{C}$ it holds that

\begin{eqnarray}
Tr(Q_{A}\times Q_{B}\times Q_{C} \rho)=Tr(Q_{A}\times Q_{B}\times Q_{C} \rho_s).
\end{eqnarray}
See also Ref. \cite{banuls} for a discussion of separability in fermionic systems with a particle number parity super-selection rule.

A state that is indistinguishable from a separable state for a fine grained partitioning, e.g. A|B|C, may still have entanglement that is detectable given a more coarse partitioning, e.g. A|BC. This is in contrast to the case without super-selection rules where the entanglement of the system can always be fully detected and characterized using the most fine grained partitioning. Thus, in general, the task of operationally characterizing the entanglement in a system with a super-selection rule requires that correlations across all partitionings are considered.

If there is no super-selection rule, parties that share a pure entangled state can always find local measurements   
with outcome correlations that are incompatible with a locally causal description of reality \cite{gisin,popescu3}. Correlations of this kind are called Bell-nonlocal \cite{Bell} and we will refer to states for which such correlations can be observed as Bell-nonlocal states.  
If there is a super-selection rule on the other hand a pure non-separable state may be indistinguishable from a separable state, and in this case all correlations are Bell-local. This is in contrast to the case without super-selection rules where states of this kind do not exist.

An example of a system with four spatial modes and two particles where Bell nonlocal correlations cannot be observed for a quadru-partition, but are present for a bipartition is the electronic two-particle Hanbury Brown-Twiss interferometer \cite{yurke,samuel,neder}. There, two electrons are emitted from independent sources and the path of each electron is split into two paths, i.e., two spatial modes. If the paths are A and B for one electron and C and D for the other, Bell-nonlocality can be observed across AC|BD and AD|BC by means of combining paths in each part of the partition, e.g., A and C, and B and D, respectively, by means of a beam splitter. But nonlocal correlations cannot be observed for AB|CD or A|B|C|D.

\subsection{Representation of three-fermion states}

States of fermions can be described in terms of creation and annihilation operators acting on the unpopulated vacuum state. The main focus in this paper is on a systems with three spin-$\frac{1}{2}$ fermions in three spatial modes under the control of three parties A, B, and C. We therefore give the explicit construction of the global and local Hilbert spaces for this scenario.
We denote the creation and annihilation operators of a fermion with spin up in the spatial mode of party A by $a^{\dagger}_{\uparrow}$ and $a_{\uparrow}$, respectively. In the same way let $a^{\dagger}_{\downarrow}$ and $a_{\downarrow}$ denote the same operators for a fermion with spin down in the spatial mode of A and let the operators $b^{\dagger}_{\uparrow},b_{\uparrow},b^{\dagger}_{\downarrow},b_{\downarrow}$ and $c^{\dagger}_{\uparrow},c_{\uparrow},c^{\dagger}_{\downarrow},c_{\downarrow}$ have the same meaning for parties B and C, respectively.

The application of operators $c^{\dagger}_{\downarrow}a^{\dagger}_{\uparrow}$ to the unpopulated spatial modes creates a state where a fermion with spin up is with party A and a fermion with spin down is with party C. However, applying the operators in the reverse order $a^{\dagger}_{\uparrow}c^{\dagger}_{\downarrow}$ creates a state with the same population of the modes.

These two states are related by an exchange of the two fermions. The antisymmetry under particle exchange implies that the two states are related by a phase-factor $-1$. This is encoded in the anti-commutation relations for the creation and annihilation operators

\begin{eqnarray}
q_ir_j+r_jq_i=0,\phantom{ooo}q^{\dagger}_ir^{\dagger}_j+r^{\dagger}_jq^{\dagger}_i=0,\nonumber\\ q^{\dagger}_ir_j+r_jq^{\dagger}_i=\delta_{qr}\delta_{ij},
\end{eqnarray}
where $q,r\in\{a,b,c\}$ and $i,j\in\{\uparrow,\downarrow\}$. In particular, $q^{\dagger}_iq^{\dagger}_i=0$ which encodes the Pauli exclusion principle that prevents any states with more than one fermion with a given internal state in the same spatial mode.
 
The fermionic exchange antisymmetry leads to a sign ambiguity in the definition of the Hilbert space vectors. This ambiguity can be resolved by using an operator ordering (See e.g. Ref. \cite{kk}).
We therefore use the convention that operators are ordered as $a_{\uparrow},a_{\downarrow},b_{\uparrow},b_{\downarrow},c_{\uparrow},c_{\downarrow}$, meaning an operator in the list is always applied before any other operator to the left of it when the basis vectors are created. Note that we have chosen an ordering where operators that act on the same spatial mode are consecutive. Such a choice naturally captures the spatial separation of A, B, and C and allows for a straightforward matrix representation of local operations.

More precisely, we define the basis vectors of the local Hilbert space of A as
\begin{eqnarray}
a_i^{\dagger}|0\rangle=|i\rangle,\phantom{ooo}
a_{\uparrow}^{\dagger}a_{\downarrow}^{\dagger}|0\rangle=|\diamondsuit\rangle,
\end{eqnarray}
where $i\in\{\uparrow,\downarrow\}$, $0$ denotes an unoccupied spatial mode and $\diamondsuit$ denotes the presence of two particles in a spatial mode.
Completely analogously we define the local Hilbert spaces of B and C by replacing $a_i^{\dagger}$ with $b_i^{\dagger}$ and $c_i^{\dagger}$, respectively. 
The basis vectors of the two party Hilbert space of AB are defined as

\begin{align}
a_i^{\dagger}|00\rangle&=|i0\rangle, & b_j^{\dagger}|00\rangle&=|0j\rangle\nonumber\\
& & a_i^{\dagger}b_j^{\dagger}|00\rangle&=|ij\rangle\nonumber\\
a_{\uparrow}^{\dagger}a_{\downarrow}^{\dagger}|00\rangle&=|\diamondsuit 0\rangle, & 
b_{\uparrow}^{\dagger}b_{\downarrow}^{\dagger}|00\rangle&=|0\diamondsuit\rangle\nonumber\\
a_{\uparrow}^{\dagger}a_{\downarrow}^{\dagger}b_{j}^{\dagger}|00\rangle&=|\diamondsuit j\rangle, & 
a_{i}^{\dagger}b_{\uparrow}^{\dagger}b_{\downarrow}^{\dagger}|00\rangle&=|i\diamondsuit\rangle,
\end{align}
where $i,j\in\{\uparrow,\downarrow\}$. The basis vectors for the Hilbert space of BC are defined as
\begin{align}
b_j^{\dagger}|00\rangle&=|j0\rangle, & c_k^{\dagger}|00\rangle&=|0k\rangle,\nonumber\\
& & b_j^{\dagger}c_k^{\dagger}|00\rangle&=|jk\rangle,\nonumber\\
b_{\uparrow}^{\dagger}b_{\downarrow}^{\dagger}|00\rangle&=|\diamondsuit 0\rangle, &
c_{\uparrow}^{\dagger}c_{\downarrow}^{\dagger}|00\rangle&=|0\diamondsuit\rangle,\nonumber\\
b_{\uparrow}^{\dagger}b_{\downarrow}^{\dagger}c_{k}^{\dagger}|00\rangle&=|\diamondsuit k\rangle, &
b_{j}^{\dagger}c_{\uparrow}^{\dagger}c_{\downarrow}^{\dagger}|00\rangle&=|j\diamondsuit\rangle,
\end{align}
where $j,k\in\{\uparrow,\downarrow\}$, and the basis vectors for the Hilbert space of AC are 
\begin{align}
a_i^{\dagger}|00\rangle&=|i0\rangle, & c_k^{\dagger}|00\rangle&=|0k\rangle,\nonumber\\
& & a_i^{\dagger}c_k^{\dagger}|00\rangle&=|ik\rangle,\nonumber\\
a_{\uparrow}^{\dagger}a_{\downarrow}^{\dagger}|00\rangle&=|\diamondsuit 0\rangle, &
c_{\uparrow}^{\dagger}c_{\downarrow}^{\dagger}|00\rangle&=|0\diamondsuit\rangle,\nonumber\\
a_{\uparrow}^{\dagger}a_{\downarrow}^{\dagger}c_{k}^{\dagger}|00\rangle&=|\diamondsuit k\rangle, &
a_{i}^{\dagger}c_{\uparrow}^{\dagger}c_{\downarrow}^{\dagger}|00\rangle&=|i\diamondsuit\rangle,
\end{align}
where $i,k\in\{\uparrow,\downarrow\}$. Finally, the basis vectors for the full Hilbert space are defined as

\begin{align}\label{kloo}
& & a_i^{\dagger}b_j^{\dagger}c_k^{\dagger}|000\rangle&=|ijk\rangle,\nonumber\\
a_{\uparrow}^{\dagger}a_{\downarrow}^{\dagger}c_k^{\dagger}|000\rangle&=|\diamondsuit 0 k\rangle, &
a_{\uparrow}^{\dagger}a_{\downarrow}^{\dagger}b_j^{\dagger}|000\rangle&=|\diamondsuit j 0\rangle,\nonumber\\
b_{\uparrow}^{\dagger}b_{\downarrow}^{\dagger}c_k^{\dagger}|000\rangle&=|0 \diamondsuit  k\rangle, &
a_i^{\dagger}b_{\uparrow}^{\dagger}b_{\downarrow}^{\dagger}|000\rangle&=|i \diamondsuit  0\rangle,\nonumber\\
a_i^{\dagger}c_{\uparrow}^{\dagger}c_{\downarrow}^{\dagger}|000\rangle&=|i 0\diamondsuit  \rangle, &
b_j^{\dagger}c_{\uparrow}^{\dagger}c_{\downarrow}^{\dagger}|000\rangle&=|0 j\diamondsuit  \rangle,
\end{align}
where $i,j,k\in\{\uparrow,\downarrow\}$.

If a state $|\psi\rangle$ in the full Hilbert space is described by a polynomial in the creation operators that can be factorized into one polynomial that contains only creation operators on a single spatial mode, e.g. $A$, and another polynomial that contains only creation operators on the other modes we express the state as a product state $|\psi\rangle=|\theta\rangle_{A}|\phi\rangle_{BC}$. In particular, all the basis vectors of the Hilbert space, constructed above, can be expressed as $|i\rangle_A|j\rangle_B|k\rangle_C$ for some $i,j,k\in\{\uparrow,\downarrow\,0,\diamondsuit\}$. Note that all product vectors of this form, with particle number from zero to six, span a 64 dimensional space  where the basis vectors with three fermions in Eq. \ref{kloo} span a 20 dimensional subspace.

Next, we note that the operators that conserve local particle number of a spatial mode, e.g. C, are 
polynomials in the creation and annihilation operators where every monomial has an even number of operators, e.g $c_{\uparrow}^{\dagger}c_{\uparrow}$ and $c_{\downarrow}^{\dagger}c_{\uparrow}^{\dagger}c_{\downarrow}c_{\uparrow}$.
Any such polynomial commutes with $a_{i}^{\dagger},b_{j}^{\dagger},a_{k}$  and $b_l$ for $i,j,k,l\in\{\uparrow,\downarrow\}$, and therefore it commutes with any operator on the other spatial modes. 
Because of this, and given the operator ordering where operators acting on the same spatial mode are consecutive, we can represent the local operations that conserve particle number in a tensor product matrix form. For example, the operator $(a^{\dagger}_{\uparrow}a_{\downarrow}+a^{\dagger}_{\downarrow}a_{\uparrow})(b^{\dagger}_{\uparrow}b_{\uparrow}-b^{\dagger}_{\downarrow}b_{\downarrow})c^{\dagger}_{\uparrow}c^{\dagger}_{\downarrow}c_{\downarrow}c_{\uparrow}$ can be represented as

\begin{eqnarray}\label{bl}\left( \begin{array}{cccc}
\! 0 &\,  1 &\, 0 &\, 0\\
\! 1 &\,  0 &\, 0 &\, 0\\
\! 0 &\,  0 &\, 0 &\, 0\\
\! 0 &\,  0 &\, 0 &\, 0\end{array} \!\right)\!\otimes \!\left( \begin{array}{cccc}
\! 1 & \!\! 0 &\! 0 &\, 0\\
\! 0 & \!\! -1 &\! 0  &\, 0\\
\! 0 &\!\! 0 &\! 0 &\, 0 \\
\! 0 &\!\! 0 &\! 0 &\, 0 \end{array} \!\right)\!\otimes\!\left( \begin{array}{cccc}
\! 0 &\,  0 &\, 0\! &\, 0 \\
\! 0 &\,  0 &\, 0\! &\, 0 \\
\! 0 &\, 0 &\, 0\! &\, 0 \\
\! 0 &\, 0 &\, 0\! &\, 1\end{array} \!\right),
\end{eqnarray}
where each individual matrix acts on the local 4 dimensional Hilbert space and is given 
in the basis $|\!\uparrow\rangle,|\!\downarrow\rangle,|0\rangle,|\diamondsuit\rangle$. The tensor product of the three matrices act on the 64 dimensional space spanned by the product vectors. We can thus see that a representation by an operator ordering where operators acting on the same spatial mode are consecutive effectively gives the Hilbert space a tensor product structure. 
This facilitates the study of entanglement between the local degrees of freedom of the different parties, represented by $|\!\uparrow\rangle,|\!\downarrow\rangle,|0\rangle,|\diamondsuit\rangle$.

Finally, we consider how to construct single-party reduced density matrices. A matrix  $\rho_{C}$ can be called a reduced density matrix for party C if for any physically allowed local operation $Q_C$ on $H_C$ it satisfies $Tr_{ABC}(Q_C\rho_{ABC})=Tr_{C}( Q_C\rho_C)$, where $\rho_{ABC}=\sum_{i,j,k,l,m,n}r_{ikm}r_{jln}^*|ikm\rangle\langle jln|$ is the density matrix of the full state.
The reduced density matrix $\rho_{C}$ is defined as $\rho_{C}=Tr_{AB}(\rho_{ABC})=\sum_{kl}\langle kl|\rho_{ABC}|kl\rangle_{AB}$ for $k,l\in\{\uparrow,\downarrow,0,\diamondsuit\}$. 
Since the particle conserving local operations by C commute with $a_{i}^{\dagger},b_{j}^{\dagger}$ for any $i,j\in\{\uparrow,\downarrow\}$ it follows that $\langle ikn|Q_C|ikm\rangle=\langle n|Q_C|m\rangle$ for any $i,k,m,n\in\{\uparrow,\downarrow\,0,\diamondsuit\}$. Therefore, $\rho_{C}$ is given by $\rho_{C}=\sum_{m,n}r_{mn}|m\rangle\langle n|$ where $r_{mn}=\sum_{ik}{r_{ikm}r_{ikn}^*}$. The reduced density matrices of A and B can be defined analogously.

\subsection{Local operations satisfying the particle conservation superselection rule}

The most general type of local operations that can be performed is the Stochastic Local Operations and Classical Communication (SLOCC) \cite{bennett}. The operational meaning of SLOCC is that each party can entangle their modes with a local ancillary system, and subsequently perform a projective measurement on the ancillary system. This is then followed by postselection based on classical communication of the measurement outcomes. We limit our interest to the SLOCC operations that can be reversed in a probabilistic sense, i.e., the operations $g$ which satisfy that $|\psi\rangle$ can be obtained from $g|\psi\rangle$ by performing another SLOCC operation. It is with respect to the action of these reversible operations that invariants can be defined. 
 
As described in Sect. \ref{secoo}, a global conservation of particle number implies that any local operation made on the system must conserve the local particle number. This implies a restriction of the operations to the block diagonal form given in Eq. \ref{block}. 
The reversible SLOCC on $n$ spatial modes satisfying the particle conservation super-selection rule is mathematically described by a block diagonal subgroup $G_{{\mbox{\tiny\itshape SLOCC}}}$ of ${\mathrm{SL}}^{\times n}$, where SL is the special linear group. For the case of spin-$\frac{1}{2}$ fermions the local action of $G_{{\mbox{\tiny\itshape SLOCC}}}$ is a subgroup of SL(4). Locally on each spatial mode the action of $G_{{\mbox{\tiny\itshape SLOCC}}}$ is described by block diagonal matrices $g$ of the form

\begin{eqnarray}\label{bl}g=\left( \begin{array}{cccc}
\!\! c_{\uparrow\uparrow} & \! c_{\uparrow\downarrow} & 0 & 0\\
\!\! c_{\downarrow\uparrow} & c_{\downarrow\downarrow} & 0 & 0\\
\!\! 0 &\! 0 & c_{00} & 0 \\
\!\! 0 &\! 0 & 0 & c_{\diamondsuit\diamondsuit}\end{array} \!\right),\phantom{uu}det(g)=1,
\end{eqnarray}
where $c_{ij}=\langle i|g|j\rangle$ for the basis $|\!\uparrow\rangle,|\!\downarrow\rangle,|0\rangle, |\diamondsuit\rangle$ of the local Hilbert space. The particle conservation super-selection rule is manifested in a block diagonal structure with three blocks. The $2\times 2$ block corresponds to the action on a single fermion in the mode, one of the $1\times 1$ blocks corresponds to the action on two fermions in the mode and the other $1\times 1$ block corresponds to the action in the absence of a fermion. Note that the particle conservation also implies that every single-party reduced density matrix has the same block-diagonal form as the local operations.

The group $G_{{\mbox{\tiny\itshape SLOCC}}}$ is a Lie-group \cite{hall} of determinant one matrices. Any element in such a group can be expressed as the matrix exponential of a traceless matrix and the structure of the a group can be described in terms of a set of generators. 
The generators of the Lie group are defined as a set of traceless matrices such that any element of the group can be expressed as a matrix exponential of some linear combination of the generators.

For a given spatial mode, the block diagonal structure in Eq.\ref{bl} implies that the local action of the group $G_{{\mbox{\tiny\itshape SLOCC}}}$ is generated by the five Hermitian traceless matrices

\begin{align}\lambda_1=&\left( \begin{array}{cccc}
\! 0 &\,  1 &\, 0 &\, 0\\
\! 1 &\,  0 &\, 0 &\, 0\\
\! 0 &\,  0 &\, 0 &\, 0\\
\! 0 &\,  0 &\, 0 &\, 0\end{array} \!\right), &\lambda_2=\left( \begin{array}{cccc}
\! 0 &\!  -i & 0 &\, 0\\
\! i &\!  0 & 0 &\, 0\\
\! 0 &\! 0 & 0 &\, 0\\
\! 0 &\!  0 & 0 &\, 0\end{array} \!\right),\nonumber\\
\lambda_3=&\left( \begin{array}{cccc}
\! 1 & \!\! 0 &\! 0 &\, 0\\
\! 0 & \!\! -1 &\! 0  &\, 0\\
\! 0 &\!\! 0 &\! 0 &\, 0 \\
\! 0 &\!\! 0 &\! 0 &\, 0 \end{array} \!\right),  &\lambda_8=\left( \begin{array}{cccc}
\! 1 &\,  0 &\! 0\! & 0 \\
\! 0 &\,  1 &\! 0\! & 0 \\
\! 0 &\, 0 &\!\! -2\! & 0\\
\! 0 &\, 0 &\! 0\! & 0\end{array} \!\right),\nonumber\\
\lambda_{15}=&\left( \begin{array}{cccc}
\! 1 &\,  0 &\, 0\! & 0\! \\
\! 0 &\,  1 &\, 0\! & 0\! \\
\! 0 &\, 0 &\, 1\! & 0\! \\
\! 0 &\, 0 &\, 0 &\!\! -3\end{array} \!\right).
\end{align}
For example, this means that in the case of three parties any element of $G_{{\mbox{\tiny\itshape SLOCC}}}$ can be expressed as $e^{\alpha\lambda_1+\beta\lambda_2+\gamma\lambda_3+\delta\lambda_8+\epsilon\lambda_{15}}\times e^{\zeta\lambda_1+\eta\lambda_2+\theta\lambda_3+\kappa\lambda_8+\mu\lambda_{15}}\times e^{\nu\lambda_1+\xi\lambda_2+\tau\lambda_3+\chi\lambda_8+\omega\lambda_{15}}$ for some $\alpha,\dots,\omega \in \mathbb{C}$.
See e.g. Ref. \cite{pff} for a description of the generators of general SU and SL groups.

We can define the subgroup $G_{1,2,3}$ of $G_{{\mbox{\tiny\itshape SLOCC}}}$ as the group with local action generated by $\lambda_1,\lambda_2,$ and $\lambda_3$. This group acts only on the local subspace spanned by $|\!\uparrow\rangle$ and $|\!\downarrow\rangle$ for each spatial mode. Analogously, we can define the subgroup $G_{8,15}$ as the group with local action generated by  $\lambda_8$ and $\lambda_{15}$. The group $G_{8,15}$ contains the only operations with a non-trivial action on  $|0\rangle$ and $|\diamondsuit\rangle$ which are allowed by the super-selection rule. For example, $e^{i\pi/4\lambda_{15}}\times I\times I$ corresponds to a particle exchange of the fermions in the spatial mode of party A, conditioned on the presence of two fermions in this mode, and a $e^{i\pi/4}$ global phase shift. It is clear from the form of the generators that $G_{1,2,3}$ and $G_{8,15}$ commute. Therefore, we can express  $G_{{\mbox{\tiny\itshape SLOCC}}}$ as a product of
$G_{1,2,3}$ and $G_{8,15}$, i.e., $G_{{\mbox{\tiny\itshape SLOCC}}}=G_{1,2,3}G_{8,15}$.

For pure states of three fermions in three spatial modes the structure of $G_{{\mbox{\tiny\itshape SLOCC}}}$ leads to a simple relation between Bell-nonlocality and the ability to distinguish the state from a separable state. Such a state is Bell-nonlocal if and only if it is distinguishable from a separable state. 

\begin{lemma}\label{lemm1}
A pure non-separable state of three delocalized fermions is Bell-local with respect to a given  partitioning if and only if it is indistinguishable from a separable state with respect to the same partitioning.
\end{lemma}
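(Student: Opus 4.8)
The plan is to exploit the fact, established in Sec.~\ref{secoo}, that every allowed local measurement is block diagonal in the local particle number, so that the statistics of any allowed measurement depend only on the \emph{dephased} state $\tilde\rho$ obtained from the given pure state by deleting all coherences between sectors of different local particle numbers. The first step is therefore to reformulate both sides of the claimed equivalence in terms of $\tilde\rho$. Since the dephasing map is itself an allowed local operation, it sends separable states to separable states; conversely, any state reproducing the allowed statistics of the given state must share its dephased part. Hence ``indistinguishable from a separable state for the given partitioning'' is equivalent to ``$\tilde\rho$ is separable for that partitioning,'' and I would then treat the two implications separately, using the contrapositive form ``distinguishable from every separable state $\Rightarrow$ Bell-nonlocal'' for the harder direction.

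The easy direction, indistinguishable from separable $\Rightarrow$ Bell-local, follows because a separable state admits an explicit local-hidden-variable model read off from its convex decomposition, so all its correlations are Bell-local; since the given state yields identical statistics for every allowed local measurement, every Bell expression takes the same value and the state is Bell-local as well.

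For the converse I would first analyze when $\tilde\rho$ can be entangled. Writing the pure state as a sum of components $|\psi_{\vec q}\rangle$ of definite local particle numbers $\vec q$, the dephased state is the incoherent mixture $\sum_{\vec q}|\psi_{\vec q}\rangle\langle\psi_{\vec q}|$, whose terms have mutually orthogonal supports in distinct sectors. Applying the allowed local projections onto these sectors shows that $\tilde\rho$ is separable precisely when every component $|\psi_{\vec q}\rangle$ is a product state across the partition. The key structural observation is that a party carrying $0$ or $2$ fermions occupies a one-dimensional local space and factors out trivially, so the only potentially entangled components are those in which every nontrivial party carries exactly one fermion, i.e.\ is a qubit: the $q_A=1$ sector for a bipartition $A|BC$ and the $(1,1,1)$ sector for the tripartition $A|B|C$. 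Consequently, if $\tilde\rho$ is entangled then one such component is a genuine pure entangled state, which by the result that every pure entangled state is Bell-nonlocal \cite{gisin,popescu3} has Bell-nonlocal measurement correlations.

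The main obstacle, and the step I would spend the most care on, is showing that this nonlocality survives inside the full state when only allowed measurements are used, i.e.\ that it cannot be washed out by the contribution of the other sectors. The strategy is to let each party adjoin to its measurement the fixed, allowed measurement of its own local particle number and report that outcome alongside the qubit result; one then argues that no local-hidden-variable model can reproduce the resulting full statistics. Here the decisive ingredient is global particle number conservation: because the local numbers sum to the fixed total in every run, any hidden-variable model reproducing this deterministic constraint is forced to assign each party a local particle number that is a deterministic, \emph{setting-independent} function of the hidden variable. This closes the detection loophole, since conditioning on the entangled sector then leaves a bona fide local-hidden-variable model for that sector's correlations, contradicting their established Bell-nonlocality. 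Hence the full state must itself be Bell-nonlocal, completing the contrapositive and the lemma.
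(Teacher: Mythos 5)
Your proof is correct and follows essentially the same route as the paper's: decompose the state into local-particle-number sectors, observe that only the sector in which each single mode holds exactly one fermion can fail to be a product, invoke the Gisin and Popescu--Rohrlich results there, and note that the remaining sectors are trivially product and mutually incoherent under the allowed measurements. Your explicit argument that global number conservation forces setting-independent local particle numbers in any hidden-variable model---so that conditioning on the entangled sector opens no post-selection loophole---is a sharpening of a step the paper treats only informally.
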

\begin{proof}
The if part of the statement is trivial. We therefore consider the only if part.

For a bi-partitioning, e.g. A|BC, a general state $|\psi\rangle$ can be written as
\begin{eqnarray}|\psi\rangle=r_1|\!\uparrow\rangle|\theta\rangle+r_2|\!\downarrow\rangle|\xi\rangle+r_3|0\rangle|\phi\rangle+r_4|\diamondsuit\rangle|\varphi\rangle,\end{eqnarray} 
where $|\theta\rangle,|\xi\rangle,|\phi\rangle,|\varphi\rangle\in H_{BC}$. Consider the projection of this state onto the subspace where one fermion is localized with A and the two others with BC. This projection is $r_1|\!\uparrow\rangle|\theta\rangle+r_2|\!\downarrow\rangle|\xi\rangle$. All states in the subspace spanned by $|\theta\rangle$ and $|\xi\rangle$ contain two fermions. Therefore, within this subspace the operations by the party with access to the modes B and C are not restricted by the super-selection rule. 
Because of this there exist measurements that exhibit nonlocal outcome correlations provided that the state is entangled as shown in Ref. \cite{gisin}. 
Any outcomes where party A finds $|0\rangle$ or $|\diamondsuit\rangle$ has only local correlations regardless of the state of BC since nonlocal correlations require at least two incompatible measurements performed by A \cite{Bell}. Such measurements cannot be performed on the subspace spanned by $|0\rangle$ or $|\diamondsuit\rangle$ due to the particle number conservation superselection rule.

The state is thus Bell-local if and only if $r_1r_2=0$ or $|\theta\rangle=|\xi\rangle$. Such a state is, up to local unitary operations, of the form $s_1|\!\downarrow\rangle|\varphi\rangle+s_2|0\rangle|\phi\rangle+s_3|\diamondsuit\rangle|\zeta\rangle$ and since the three terms belong to different super selction sectors this state is indistinguishable from a separable state $|s_1|^2|\!\downarrow\rangle\langle\downarrow\!|\otimes|\varphi\rangle\langle\varphi|+|s_2|^2|0\rangle\langle 0|\otimes|\phi\rangle\langle\phi|+|s_3|^2|\diamondsuit\rangle\langle \diamondsuit|\otimes|\zeta\rangle\langle\zeta|$. 

For a partitioning A|B|C we consider the projection of the state onto a subspace where one fermion is with each party. If this projection is an entangled pure state there are measurements with nonlocal outcome statistics \cite{popescu3}. If on the other hand the projection is a product state the outcome statistics is trivially Bell-local.

If instead party A finds its local state to be $|0\rangle$ the state of parties B and C is projected onto the subspace spanned by $|\!\uparrow\!\diamondsuit\rangle$, $|\!\downarrow\!\diamondsuit\rangle$, $|\diamondsuit\!\uparrow\rangle$, and $|\diamondsuit\!\downarrow\rangle$. 
No pair of local operations on B and C can transform between a vector $|i\diamondsuit\rangle$ and a vector $|\diamondsuit j\rangle$ for any $i,j\in\{\uparrow,\downarrow\}$ because of the particle number conservation super-selection rule. This implies that all states within this subspace are indistinguishable from separable states. Furthermore, for a vector of the kind $|i\diamondsuit\rangle$ or $|\diamondsuit j\rangle$, incompatible measurements can only be performed on one of the two spatial modes.
Therefore any correlation between measurement outcomes are local. 
The same argument can be made if we exchange the roles of $|0\rangle$ and $|\diamondsuit\rangle$, and for any permutation of the parties A, B, and C.

\end{proof}

\subsection{Entanglement classification using invariants}
\label{ent}

To characterize the entanglement of a system one can describe it in terms of quantities and parameters specific to a particular physical realization. However, our main interest is realization independent properties, i.e., intrinsic properties of the system.

An often used tool that depends only on intrinsic properties and describes entanglement between two parts of a system is the subsystem entropy \cite{vonneumann,horodecki}. However, since the entropy is a single real valued function it cannot in general capture all the different ways in which the parts of the system can be entangled. The subsystem entropy has been used in the study of fermionic systems in for example Refs. \cite{johannesson,wolf}.
A tool that can characterize all the different types of entanglement is the polynomial entanglement invariants \cite{linden,grassl,wootters,popescu2,coffman,sudbery,luquethibon1,Osterlohsiewert2005}.
Polynomial invariants can be constructed both for the local unitary operations and for SLOCC. Here we focus on the invariants under SLOCC. 
A polynomial invariant of $G_{{\mbox{\tiny\itshape SLOCC}}}$ is defined as a polynomial in the coefficients of the state vector which is invariant under the action of $G_{{\mbox{\tiny\itshape SLOCC}}}$. 
Two states are said to have the same type of entanglement with respect to $G_{{\mbox{\tiny\itshape SLOCC}}}$ if they can be converted into each other by $G_{{\mbox{\tiny\itshape SLOCC}}}$ with a nonzero probability. If this is the case any ratio between two $G_{{\mbox{\tiny\itshape SLOCC}}}$ invariants take the same value for the two states.
In this sense the invariants under $G_{{\mbox{\tiny\itshape SLOCC}}}$ serve as coordinates on the space of $G_{{\mbox{\tiny\itshape SLOCC}}}$ inter-convertibility classes, i.e., different types of entanglement with respect to $G_{{\mbox{\tiny\itshape SLOCC}}}$. However, note that there exist a subset of the inter-covertibility classes that cannot be distinguished by the $G_{{\mbox{\tiny\itshape SLOCC}}}$ invariants and even a subset where all $G_{{\mbox{\tiny\itshape SLOCC}}}$ invariants take the value zero. These classes can only be distinguished by invariants under local unitary operations.
A subset of the invariants with the property that all
invariants can be constructed algebraically from the subset is
called a set of generators \cite{olver}. Any such set of generators thus gives the same description of the entanglement types as the full set of invariants. 
Suitably chosen the generators correspond to physically relevant properties of the system that depend on the entanglement.

Another use for the SLOCC invariants is that any such invariant can be used to construct a function that is non-increasing under SLOCC, a so called {\it entanglement monotone} \cite{vidal,verstraete2003}. It was shown in Ref. \cite{verstraete2003} that if $I$ is a SLOCC invariant of degree $k$, the function $|{I}|^{2/k}$ is an entanglement monotone. To understand the precise meaning of such a non-increasing function under SLOCC we must understand two  properties of SLOCC operations and how they are mathematically represented. First, a SLOCC operation, when represented as a determinant one matrix, does not generally preserve the norm of the state vector. This implies that if the state vector is normalized after the SLOCC operation has been performed, the value of $I$ for the output state is in general not the same as that of the input state. Secondly, a SLOCC operation can in general not be performed with unit probability of success. The outcome state is a statistical mixture of the desired outcome state and some other state(s), and the desired SLOCC operation is realized only after a postselection to remove the undesired outcome states.

If the average of $I$ is taken over all outcome states, including the undesired ones, and these states are properly normalized, this average is typically not the same as the value for $I$ on the input state. However, the average value of the function $|{I}|^{2/k}$ does not increase \cite{verstraete2003}.
A function of this kind that does not increase under SLOCC, an entanglement monotone,  quantifies a property of the system that cannot be created by SLOCC. In a system without superselection rules this directly implies that it quantifies a property of the entanglement since all non-entangled states can be interconverted by SLOCC. In this case the monotone serves as a measure of the particular type of entanglement that is distinguished by the invariant it was constructed from.

Closely related to entanglement measures is the notion of maximally entangled states.
A maximally entangled state is defined as a pure state for which all the single-party reduced density matrices are maximally mixed \cite{mahler}. Such a state thus has the maximal possible single-party subsystem entropy.
If a state $|\psi\rangle$ and a maximally entangled state can be interconverted by SLOCC, there exist at least one entanglement monotone which takes a non-zero value for $|\psi\rangle$. Furthermore, any monotone attains its maximal value, on the SLOCC inter-convertibility class of $|\psi\rangle$, on the set of maximally entangled states that belong to the class \cite{kempfness,verstraete2003}. If there is no maximally entangled state in a given SLOCC inter-convertibility class there still exist at least one entanglement monotone that take nonzero values if the states in the class can be brought arbitrarily close to maximally entangled states by SLOCC.

The results in Theorem 1 and 2 of Ref. \cite{verstraete2003} that relate SLOCC invariants to SLOCC monotone functions and maximally entangled states are formulated in the context of distinguishable particles where the local action of SLOCC operations is described by general SL matrices. Nevertheless, the theorems are valid also in the case considered here with block-diagonal subgroups of SL and block diagonal single-party reduced density matrices. 
However, in the presence of a particle conservation superselection rule it is not a priori clear that
a monotone measures a property of the state that corresponds to non-local causation, since the set of SLOCC operations is restricted. 
We therefore show that in the case of three spin-$\frac{1}{2}$ fermions a nonzero value of an entanglement monotone implies that Bell-nonlocal correlations can be observed across all bi-partitions.

\begin{theorem}\label{t1}
If a state $|\psi\rangle$ of three spin-$\frac{1}{2}$ fermions is Bell-local over a bipartition, there does not exist any entanglement monotone that takes a non-zero value for $|\psi\rangle$ and thus there exist no maximally entangled state 
that $|\psi\rangle$ can be converted to by SLOCC. Moreover, if $|\psi\rangle$ is Bell local any global phase shift of the state vector $|\psi\rangle$ can be induced by a special unitary operation. Thus, if an entanglement monotone takes a non-zero value the state is Bell-nonlocal over every bi-partition.
\end{theorem}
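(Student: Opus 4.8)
The plan is to prove the highlighted ``moreover'' clause first, since it is the mechanism that forces all invariants---and hence all monotones---to vanish, and then to read off the remaining assertions.

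First I would appeal to Lemma \ref{lemm1}. Because $|\psi\rangle$ is assumed Bell-local over some bipartition, I may relabel the parties so that this bipartition is A|BC. The proof of Lemma \ref{lemm1} then shows that, after a local unitary on A (an element of $G_{{\mbox{\tiny\itshape SLOCC}}}$, so that every invariant is left unchanged), the state takes the canonical form
\begin{equation}
|\psi\rangle=s_1|\!\downarrow\rangle|\varphi\rangle+s_2|0\rangle|\phi\rangle+s_3|\diamondsuit\rangle|\zeta\rangle ,
\end{equation}
in which the single-particle sector of mode A collapses to the single direction $|\!\downarrow\rangle$. The one structural fact I need to isolate from this is that the local basis vector $|\!\uparrow\rangle$ on mode A is unoccupied.

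Next I would exhibit, for an arbitrary phase $\theta$, the diagonal local operation on mode A
\begin{equation}
U_A=\mathrm{diag}\!\left(e^{-3i\theta},e^{i\theta},e^{i\theta},e^{i\theta}\right)
\end{equation}
in the basis $|\!\uparrow\rangle,|\!\downarrow\rangle,|0\rangle,|\diamondsuit\rangle$, acting as the identity on B and C, and verify three properties: it is block diagonal and unitary of the form \eqref{bl}; its determinant equals $e^{-3i\theta+i\theta+i\theta+i\theta}=1$, so it is a special unitary belonging to $G_{{\mbox{\tiny\itshape SLOCC}}}$; and $U_A|\psi\rangle=e^{i\theta}|\psi\rangle$, because the only basis vector receiving the anomalous phase $e^{-3i\theta}$ is the unoccupied $|\!\uparrow\rangle$. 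This proves the ``moreover'' statement. The crux of the whole theorem lives here: the determinant-one constraint makes it impossible to phase all four local basis states by a common factor, so a uniform global phase is available only when at least one local sector is unoccupied. Recognizing that Bell-locality, through the reduction of Lemma \ref{lemm1}, is precisely what empties the $|\!\uparrow\rangle$ sector is the main conceptual step; the remaining care is routine, since the degenerate subcases $s_1=0$, $s_2=0$, or $s_3=0$ only enlarge the available freedom, whereas a state with a genuinely two-dimensional single-particle block---where no sector is free---correctly falls outside the argument and can carry nonzero invariants.

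Finally I would convert this into the statement about monotones. Taking any SLOCC invariant $I$, homogeneous of degree $k$, invariance under $G_{{\mbox{\tiny\itshape SLOCC}}}$ gives $I(|\psi\rangle)=I(U_A|\psi\rangle)$, while $U_A|\psi\rangle=e^{i\theta}|\psi\rangle$ together with homogeneity gives $I(U_A|\psi\rangle)=e^{ik\theta}I(|\psi\rangle)$. Hence $I(|\psi\rangle)=e^{ik\theta}I(|\psi\rangle)$ for every $\theta$, and choosing $\theta=\pi/k$ forces $I(|\psi\rangle)=0$; thus every invariant, and with it every monotone $|I|^{2/k}$, vanishes on $|\psi\rangle$. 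Since any state convertible by SLOCC to a maximally entangled state must carry a nonzero monotone \cite{verstraete2003,kempfness}, no such maximally entangled target can exist for $|\psi\rangle$. The displayed conclusion is then the contrapositive: a nonzero monotone rules out Bell-locality over A|BC, and because the same construction applies verbatim to the bipartitions B|AC and C|AB by symmetry, it rules out Bell-locality over every bipartition.
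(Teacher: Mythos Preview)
Your proof is correct and rests on the same mechanism as the paper's: reduce via Lemma~\ref{lemm1} to a state in which one local basis vector on mode A is unoccupied, then exhibit a one-parameter family of diagonal determinant-one operations on A that acts on $|\psi\rangle$ as a pure scalar. The difference is only in which half of the parameter you exploit first. The paper takes the parameter complex, uses the real part to drive the norm to zero, and invokes Kempf--Ness and the non-invariance of homogeneous polynomials under rescaling to kill all invariants; the phase statement is then read off from the imaginary part as a corollary. You instead run only the imaginary (special-unitary) part, establish the ``moreover'' clause first, and then kill every degree-$k$ invariant via $I(|\psi\rangle)=e^{ik\theta}I(|\psi\rangle)$; the absence of a maximally entangled target is deduced afterwards from the vanishing of all monotones via \cite{verstraete2003}. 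Your ordering is marginally leaner in that it never leaves the special-unitary subgroup, whereas the paper's ordering makes the link to the Kempf--Ness picture (orbit closure containing $0$) more transparent; substantively the two arguments are the same diagonal-stabilizer trick viewed from two angles.
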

\begin{proof}

As described in the proof of Lemma \ref{lemm1}, if a three spin-$\frac{1}{2}$ fermion state $|\psi\rangle$ is Bell-local over a bipartition A|BC there exist a basis such that
$|\psi\rangle=r_1|\!\uparrow\rangle|\theta\rangle+r_2|0\rangle|\phi\rangle+r_3|\diamondsuit\rangle|\varphi\rangle$. For a state of this form the determinant one matrix $e^{-6\alpha\lambda_3+2\alpha\lambda_8+\alpha\lambda_{15}}\times{I}\times{I}$  scales the state vector by a factor of $|e^{-3\alpha}|$ and changes the phase by $\arg(e^{-3\alpha})$. By making $Re(\alpha)$ sufficiently large we can bring the norm arbitrarily close to zero. This implies that the set of states which are inter-convertible with $|\psi\rangle$ does not contain any maximally entangled state. This follows from the Kempf-Ness theorem \cite{kempfness}, and Theorem 1 of Ref. \cite{verstraete2003}. It also implies that there does not exist any polynomial SLOCC invariant that takes a nonzero value. This follows since no homogeneous polynomial is invariant under a scaling of the state vector.
Thus, if there exist an entanglement monotone which takes a non-zero value for a given state this state cannot have the form $|\psi\rangle=r_1|\!\uparrow\rangle|\theta\rangle+r_2|0\rangle|\phi\rangle+r_3|\diamondsuit\rangle|\varphi\rangle$ for any bipartition and must be Bell-nonlocal over every bipartition.

Moreover, the form of the state $|\psi\rangle$ also implies that
$e^{3i\alpha}e^{-6i\alpha\lambda_3+2i\alpha\lambda_8+i\alpha\lambda_{15}}\times{I}\times{I}$ is an element of the stabilizer group. This means that under the special unitary operation $e^{-6i\alpha\lambda_3+2i\alpha\lambda_8+i\alpha\lambda_{15}}\times{I}\times{I}$ the state accumulates a phase-factor $e^{-3i\alpha}$.

\end{proof}
Thus, for the case of three spin-$\frac{1}{2}$ fermions a monotone constructed from a $G_{{\mbox{\tiny\itshape SLOCC}}}$ invariant serve as a measure of distinguishable entanglement, i.e., as a measure of some property of the system that makes non-local causation possible. In particular the monotone serves as a measure of the type of distinguishable entanglement that is characterized by the invariant it was constructed from.

Note that a SLOCC invariant can be nonzero for a state only if all of its one party reduced density matrices are full rank. If the system does not have states satisfying this property it is impossible to construct SLOCC invariants that take nonzero values, and there are no maximally entangled states. But beyond this requirement we can give a stronger constraint on the existence of SLOCC invariants and maximally entangled states.

\begin{theorem}\label{lut}
There exist entanglement measures and maximally entangled states for a system of spin-$\frac{1}{2}$ fermions only if the number of spatial modes equals the number of fermions.
\end{theorem}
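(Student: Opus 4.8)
The plan is to prove the contrapositive: if the number of spatial modes $n$ differs from the number of fermions $N$, then every state in the $N$-fermion sector can be driven to zero norm by a single element of $G_{{\mbox{\tiny\itshape SLOCC}}}$, from which the nonexistence of both nonzero invariants and maximally entangled states follows exactly as in the proof of Theorem \ref{t1}. The crucial observation is that the local Lie algebra contains, through $\lambda_8$ and $\lambda_{15}$, a traceless diagonal generator that scales each local basis vector by a factor determined only by its particle number.

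First I would identify this generator explicitly. In the ordered basis $|\!\uparrow\rangle,|\!\downarrow\rangle,|0\rangle,|\diamondsuit\rangle$ the combination $D=\tfrac{1}{3}(\lambda_8-\lambda_{15})=\mathrm{diag}(0,0,-1,1)$ is traceless and therefore generates a one-parameter subgroup of the local determinant-one group. Its eigenvalue on a local basis vector $|w\rangle$ equals $p(w)-1$, where $p(w)\in\{0,1,1,2\}$ is the local particle number; the shift by $1$ is forced by tracelessness and is exactly the mean occupation $(0+1+1+2)/4$. I would then take the global element $g_t=e^{tD}\times\cdots\times e^{tD}$, acting with the same parameter on all $n$ modes, which lies in $G_{{\mbox{\tiny\itshape SLOCC}}}$ because each local factor has unit determinant.

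The key step is the computation of the action of $g_t$ on an arbitrary $N$-fermion basis vector $|w_1\cdots w_n\rangle$. This vector is rescaled by $\prod_{m}e^{t(p(w_m)-1)}=e^{t(\sum_m p(w_m)-n)}=e^{t(N-n)}$, where global particle number conservation gives $\sum_m p(w_m)=N$ for every basis vector appearing in a physical state. Since the rescaling factor is the same constant $e^{t(N-n)}$ for all such basis vectors, it follows that $g_t|\psi\rangle=e^{t(N-n)}|\psi\rangle$ for every $|\psi\rangle$ in the $N$-fermion sector. If $n\neq N$, then $N-n\neq0$, and choosing the sign of $t$ so that $t(N-n)\to-\infty$ brings the norm of $g_t|\psi\rangle$ arbitrarily close to zero.

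I would finish exactly as in Theorem \ref{t1}: a uniform scaling to zero norm means $0\in\overline{G_{{\mbox{\tiny\itshape SLOCC}}}|\psi\rangle}$, so no homogeneous polynomial invariant can take a nonzero value, since invariance together with homogeneity of degree $k$ gives $I(|\psi\rangle)=e^{kt(N-n)}I(|\psi\rangle)$ for all $t$, forcing $I(|\psi\rangle)=0$; and by the Kempf-Ness theorem \cite{kempfness} together with Theorem 1 of Ref. \cite{verstraete2003} the orbit cannot be closed and hence contains no maximally entangled minimal vector. As this argument applies to every state, no entanglement measures and no maximally entangled states exist unless $n=N$. I do not expect a genuine computational obstacle; the only real content is recognizing that the particle-number-weighted direction $D$ is available in the algebra precisely because the vacancy and double-occupancy sectors $|0\rangle,|\diamondsuit\rangle$ carry the independent generators $\lambda_8,\lambda_{15}$, and that tracelessness converts global particle number conservation into the universal scaling mismatch $N-n$.
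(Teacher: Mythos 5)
Your proposal is correct and follows essentially the same route as the paper: the paper's proof uses the diagonal local matrix $\mathrm{diag}(1,1,re^{i\phi},r^{-1}e^{-i\phi})$, which is exactly your $e^{tD}$ with $e^{-t}=re^{i\phi}$, applied identically on all $n$ modes to produce the uniform scaling $e^{t(N-n)}$ on the $N$-fermion sector, and then concludes via homogeneity and Kempf--Ness/Ref.~\cite{verstraete2003} just as you do. The only cosmetic difference is that you derive the matrix from the generators $\lambda_8,\lambda_{15}$ rather than writing it down directly.
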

\begin{proof}
Consider $m$ spin-$\frac{1}{2}$ fermions in a system of $n$ spatial modes and the matrix $g$

\begin{eqnarray}g=\left( \begin{array}{cccc}
 1 &\,\,\,\,\,\,\,\, 0 &\,\,\,\, 0 &\!\!\!\!\! 0\\
 0 &\,\,\,\,\,\,\,\, 1 &\,\,\,\, 0 &\!\!\!\!\! 0\\
 0 &\,\,\,\,\,\,\,\, 0 &\,\,\,\, re^{i\phi}&\!\!\!\!\! 0\\
 0 &\,\,\,\,\,\,\,\, 0 &\,\,\,\, 0 &\!\!\!\! r^{-1}e^{-i\phi}\end{array}\!\!\right).\nonumber
\end{eqnarray}
The action of $g^{\times{n}}\in G_{{\mbox{\tiny\itshape SLOCC}}}$ on any basis vector of the Hilbert space depends only on the number of fermions $m$ and the number of spatial modes $n$. In particular, the action is a scaling of the state vector by the factor $r^{(n-m)}e^{(n-m)i\phi}$.  

It follows that for $r\neq{1}$ it is impossible to construct polynomials in the state vector coefficients that are invariant under this action unless $n=m$. 
\end{proof}
Thus, if the number of spin-$\frac{1}{2}$ fermions does not equal the number of spatial modes the entanglement of the system can never be maximal in the sense that the single-party reduced density matrices are proportional to the identity.

\section{The unconstrained case}
\label{sec3}
In this section we consider the case where the only physical constraint on the system of fermions is the global particle number conservation. First we consider systems with two or three spin-$\frac{1}{2}$ fermions and two or three spatial modes, respectively, and SLOCC invariants for these systems are constructed. In addition to this, two general conditions for the existence of SLOCC invariants, which are valid for arbitrary spin and number of particles, are given.

\subsection{Two parties and two spin-$\frac{1}{2}$ fermions}
\label{ferm}

For a system of two spin-$\frac{1}{2}$ fermions and two spatial modes, where each mode is associated to a party that can make local operations, the SLOCC invariants are generated by a single SLOCC invariant $I_0$
\begin{eqnarray}
I_0=(m_{\uparrow\uparrow}m_{\downarrow\downarrow}-m_{\uparrow\downarrow}m_{\downarrow\uparrow})m_{0\diamondsuit}m_{\diamondsuit 0},
\end{eqnarray}
where $m_{ij}$ is the amplitude of $|ij\rangle$ in the state vector.
The invariant is a product of the concurrence polynomial \cite{wootters} which is the only generator of the $G_{1,2,3}$ invariants, and the factor $m_{0\diamondsuit}m_{\diamondsuit 0}$ which is necessary for invariance under $G_{8,15}$.

A measure can be constructed from $I_0$ as

\begin{eqnarray}
|I_0|^{1/2}=|(m_{\uparrow\uparrow}m_{\downarrow\downarrow}-m_{\uparrow\downarrow}m_{\downarrow\uparrow})m_{0\diamondsuit}m_{\diamondsuit 0}|^{1/2}.
\end{eqnarray}

The fact that the SLOCC invariants are generated by a single polynomial implies that there exist only a single maximally entangled state up to local unitary transformations. This state is given by 

\begin{eqnarray}\label{maxi}
\frac{1}{2}(|\!\uparrow\uparrow\rangle+|\!\downarrow\downarrow\rangle+|0\diamondsuit\rangle+|\diamondsuit 0\rangle).
\end{eqnarray}
The measure $|I_0|^{1/2}$ takes its maximum value $1/4$ for this state.

\subsubsection{Comparison with the Slater rank and fermionic concurrence}

Here we briefly compare the description of entanglement using the invariant $I_0$ and its corresponding measure to the Slater rank and fermionic concurrence constructed in Ref. \cite{shliemancirackus} and Ref. \cite{mcdonalds}, respectively.  The context of both these constructions is the description of correlations in a system of non-interacting fermions undergoing global unitary evolution. In such a non-interacting system the Slater rank \cite{shliemancirackus} is a conserved quantity and has been considered as a way to characterize correlations between the fermions. The Slater rank is defined as the minimal number of Hilbert space vectors needed to describe the state of the system, up to global unitary operations that do not cause interaction between the particles. For a system of $n$ spatial modes this group is $U(2n)$ with an action on each of the single particle Hilbert spaces. This scenario is different from the standard entanglement scenario since it does not feature local parties with access to different parts of the system but instead a single party with access to the whole system. See also Ref. \cite{fisher} for a discussion comparing the different characterizations of correlations.

The group of all local unitary operations include some operations that cause interaction and are therefore not allowed in the non-interacting scenario of Ref. \cite{shliemancirackus}. The Slater rank of two fermions is not a conserved quantity under these operations. An example of a local operation that causes interaction is a particle exchange in the spatial mode of party A, represented by $e^{-i\pi/4}e^{i\pi/4\lambda_{15}}\times{I}$. To see this consider the state $|\psi_1\rangle$ with Slater rank 1

\begin{align}
|\psi_1\rangle&=\frac{1}{2}(|\!\downarrow\uparrow\rangle-|\!\uparrow\downarrow\rangle-|\diamondsuit 0\rangle-|0\diamondsuit\rangle)\nonumber\\
&=\frac{1}{2}(a_{\downarrow}^{\dagger}b_{\uparrow}^{\dagger}-a_{\uparrow}^{\dagger}b_{\downarrow}^{\dagger}-a_{\uparrow}^{\dagger}a_{\downarrow}^{\dagger}-b_{\uparrow}^{\dagger}b_{\downarrow}^{\dagger})|00\rangle\nonumber\\
&=\frac{1}{2}(a_{\downarrow}^{\dagger}+b_{\downarrow}^{\dagger})(a_{\uparrow}^{\dagger}+b_{\uparrow}^{\dagger})|00\rangle.
\end{align}
The local unitary operation $e^{-i\pi/4}e^{i\pi/4\lambda_{15}}\times{I}$ transforms $|\psi_1\rangle$ to the state $|\psi_2\rangle$ given by

\begin{align}
|\psi_2\rangle&=\frac{1}{2}(|\!\downarrow\uparrow\rangle-|\!\uparrow\downarrow\rangle+|\diamondsuit 0\rangle-|0\diamondsuit\rangle)
\nonumber\\
&=\frac{1}{2}(a_{\downarrow}^{\dagger}b_{\uparrow}^{\dagger}-a_{\uparrow}^{\dagger}b_{\downarrow}^{\dagger}+a_{\uparrow}^{\dagger}a_{\downarrow}^{\dagger}-b_{\uparrow}^{\dagger}b_{\downarrow}^{\dagger})|00\rangle\nonumber\\
&=\frac{1}{2}[a_{\downarrow}^{\dagger}(b_{\uparrow}^{\dagger}-a_{\uparrow}^{\dagger})+b_{\downarrow}^{\dagger}(a_{\uparrow}^{\dagger}+b_{\uparrow}^{\dagger})]|00\rangle,
\end{align} 
which has Slater rank 2.
Both, $|\psi_1\rangle$ and $|\psi_2\rangle$ are maximally entangled and thus $|I_0|^{1/2}$ takes its maximum value $1/4$ for both states. 

It can also be noted that the Slater rank does not make a difference between states that are
distinguishable and states that are indistinguishable from separable states. The state $|\psi_3\rangle$ is the familiar Bell state while the state $|\psi_4\rangle$ is a superposition of a pair of fermions being either with A or with B and cannot be distinguished from a separable state 
\begin{align}\label{eqo}
|\psi_3\rangle&=\frac{1}{\sqrt{2}}(|\!\uparrow\downarrow\rangle+|\!\downarrow\uparrow\rangle)\nonumber\\
|\psi_4\rangle&=\frac{1}{\sqrt{2}}(|\diamondsuit 0\rangle+|0\diamondsuit\rangle).
\end{align}
Both of the states have Slater rank 2.

In relation to the Slater rank a correlation measure, the fermionic concurrence, defined by

\begin{eqnarray}
|m_{0\diamondsuit}m_{\diamondsuit 0}-m_{\uparrow\uparrow}m_{\downarrow\downarrow}+m_{\uparrow\downarrow}m_{\downarrow\uparrow}|,
\end{eqnarray}
was introduced in Ref. \cite{mcdonalds} and is zero if and only if the Slater rank is 1.
Since the Slater rank is not conserved by local operations neither is the fermionic concurrence. The fermionic concurrence is zero for the state $|\psi_1\rangle$ but equals $1/2$ for $|\psi_2\rangle$.

It is thus clear that the description of correlations in Ref. \cite{shliemancirackus} and the description of entanglement considered in this paper describe qualitatively different properties of the system. Moreover, the Slater rank and the fermionic concurrence are not invariant under local unitary operations and the latter can thus not be used as an entanglement measure.

\subsubsection{Comparison with the subsystem entropy}

For a pure state $|\Psi\rangle$ the subsystem entropy $\mathcal{E}\equiv -Tr[\rho_A \ln(\rho_A)]$ of the reduced density matrix $\rho_A$ of party A quantifies the extent of mixedness of $\rho_A$. In a system without superselection rules this is in direct correspondence to entanglement and Bell-nonlocality. If $\mathcal{E}=0$ party A is unentangled with the rest of the system and the state is Bell local, and if $\mathcal{E}>0$ there is entanglement and Bell-nonlocality. In particular if $\mathcal{E}=\ln(d)$, where $d$ is the dimension of the local Hilbert space, the state is maximally entangled with the rest of the system.  

In the case with the particle conservation super-selection rule the connection with entanglement and Bell-nonlocality is not this simple. For example the two states $|\psi_3\rangle$ and $|\psi_4\rangle$ in Eq. \ref{eqo} both have the subsystem entropy $\ln(2)$. While $|\psi_3\rangle$ is a maximally entangled Bell state of two localized particles, the state $|\psi_4\rangle$ is indistinguishable from a separable state.
Nevertheless, the subsystem entropy reaches its maximal value $\ln(4)$ for the maximally entangled state in Eq. \ref{maxi}. Moreover, if the state is Bell local it follows that $\mathcal{E}\leq\ln(3)$. This can be seen from the proof of Lemma \ref{lemm1} since a Bell-local state can always be written $s_1|\!\downarrow\rangle|\varphi\rangle+s_2|0\rangle|\phi\rangle+s_3|\diamondsuit\rangle|\zeta\rangle$ up to local unitary transformations. The reduced density matrix is then of rank 3 or less. Thus, for a pure state a subsystem entropy above $\ln(3)$ guarantees Bell-nonlocality but no conclusions can be made for a lower but nonzero value.

\subsection{Three spin-$\frac{1}{2}$ fermions and three parties}
For three spin-$\frac{1}{2}$ fermions shared between three parties there is at least 7 independent SLOCC invariants. As opposed to the case of three distinguishable spin-$\frac{1}{2}$ particles where the number of SLOCC inter-convertibility classes is finite, in this case there is an infinite number of such classes parametrized by at least these 7 invariants.
 
The method used here to construct the $G_{{\mbox{\tiny\itshape SLOCC}}}$ invariants is to find the algebra of polynomial $G_{1,2,3}$-invariants and then select the subalgebra of $G_{8,15}$ invariants. Since $G_{1,2,3}$ and $G_{8,15}$ commute this subalgebra contains all the $G_{{\mbox{\tiny\itshape SLOCC}}}$ invariants.
Furthermore, $G_{1,2,3}$ is isomorphic to ${\mathrm{SL(2)}}^{\times 3}$ and therefore it is possible to use Cayley's Omega Process (see e.g. \cite{olver}) to find the $G_{1,2,3}$ invariants. The Cayley Omega Process is an iterative method to construct the invariants from a representation of the states of the system by a set of multilinear forms. This procedure is described in the Appendix. 

While the Omega Process can be used to construct invariants it does not allow us to know if a given set of invariants generate the full algebra of SLOCC invariants. There is thus no guarantee that the seven constructed invariants are sufficient to generate all invariants. A simple dimension counting argument comparing the dimension of the Hilbert space with the dimension of the group $G_{{\mbox{\tiny\itshape SLOCC}}}$ gives that there must be at least 20-15=5 complex parameters describing the set of SLOCC interconvertibility classes. From this argument it is thus not ruled out that 7 invariants are sufficient.

Using the Omega Process a set of seven linearly independent invariants of degrees 4, 4, 8, 8, 8, 12, and 12 were found. The search did not yield any generators of degree 16. Higher degrees were not considered due to the rapidly increasing complexity of the search when the degrees increase. Thus, no conclusions can be made about the existence of higher degree invariants.

Since several independent invariants exist for each degree there is no unique choice of generators as homogeneous polynomials. The set was therefore chosen to give the individual invariants a clear physical meaning.
The two invariants of degree 4 were chosen as

\begin{align}
I^{(1)}&=m_{{\uparrow}0\diamondsuit}m_{0\diamondsuit{\uparrow}}(m_{\diamondsuit{\uparrow}0}m_{\downarrow\downarrow\downarrow}
-m_{\diamondsuit{\downarrow}0}m_{\downarrow\uparrow\downarrow})\nonumber\\
&+m_{{\uparrow}0\diamondsuit}m_{0\diamondsuit{\downarrow}}(m_{\diamondsuit{\downarrow}0}m_{\downarrow\uparrow\uparrow}
-m_{\diamondsuit{\uparrow}0}m_{\downarrow\downarrow\uparrow})\nonumber\\
&+m_{\downarrow 0\diamondsuit}m_{0\diamondsuit{\uparrow}}(
m_{\diamondsuit{\downarrow}0}m_{\uparrow\uparrow\downarrow}-m_{\diamondsuit{\uparrow}0}m_{\uparrow\downarrow\downarrow})\nonumber\\
&+m_{{\downarrow}0\diamondsuit}m_{0\diamondsuit{\downarrow}}(m_{\diamondsuit{\uparrow}0}m_{\uparrow\downarrow\uparrow}
-m_{\diamondsuit{\downarrow}0}m_{\uparrow\uparrow\uparrow}),
\end{align}
and
\begin{align}
I^{(2)}&=m_{{\uparrow}\diamondsuit 0}m_{\diamondsuit 0{\uparrow}}(m_{0{\uparrow}\diamondsuit}m_{\downarrow\downarrow\downarrow}
-m_{0{\downarrow}\diamondsuit}m_{\downarrow\uparrow\downarrow})\nonumber\\
&+m_{{\uparrow}\diamondsuit 0}m_{\diamondsuit 0{\downarrow}}(m_{0{\downarrow}\diamondsuit}m_{\downarrow\uparrow\uparrow}
-m_{0{\uparrow}\diamondsuit}m_{\downarrow\downarrow\uparrow})\nonumber\\
&+m_{\downarrow \diamondsuit 0}m_{\diamondsuit 0{\uparrow}}(
m_{0{\downarrow\diamondsuit }}m_{\uparrow\uparrow\downarrow}-m_{ 0{\uparrow}\diamondsuit}m_{\uparrow\downarrow\downarrow})\nonumber\\
&+m_{{\downarrow}\diamondsuit 0}m_{\diamondsuit 0{\downarrow}}(m_{0{\uparrow}\diamondsuit}m_{\uparrow\downarrow\uparrow}
-m_{0{\downarrow}\diamondsuit}m_{\uparrow\uparrow\uparrow}),
\end{align}
where $m_{ijk}$ is the amplitude of $|ijk\rangle$ in the state vector.
A state for which $I^{(1)}$ or $I^{(2)}$ is nonzero is Bell-nonlocal for every bi-partition, but does not necessarily show non-locality for a tripartition. This is thus the minimal required pattern of Bell-nonlocality necessary for the existence of SLOCC invariant entanglement as described in Theorem \ref{t1}. Note that $I^{(1)}$ or $I^{(2)}$ are related by a permutation of any pair of the spatial modes.

The three invariants of degree 8 were chosen as
\begin{widetext}

\begin{align}
I_{BC}=&(m_{0\diamondsuit\uparrow}m_{\diamondsuit 0\downarrow}-m_{0\diamondsuit\downarrow}m_{\diamondsuit 0\uparrow})(m_{0\uparrow\diamondsuit}m_{\diamondsuit\downarrow 0}-m_{0\downarrow\diamondsuit}m_{\diamondsuit\uparrow 0})\nonumber\\
&\times\Big[2m_{\uparrow 0\diamondsuit}m_{\uparrow\diamondsuit 0}(m_{\downarrow\downarrow\uparrow}m_{\downarrow\uparrow\downarrow}-m_{\downarrow\downarrow\downarrow}m_{\downarrow\uparrow\uparrow})+2m_{\downarrow 0\diamondsuit}m_{\downarrow\diamondsuit 0}(m_{\uparrow\downarrow\uparrow}m_{\uparrow\uparrow\downarrow}-m_{\uparrow\downarrow\downarrow}m_{\uparrow\uparrow\uparrow})\nonumber\\
&\phantom{\times[]}+(m_{\uparrow 0\diamondsuit}m_{\downarrow\diamondsuit 0}+m_{\downarrow 0\diamondsuit}m_{\uparrow\diamondsuit 0})(m_{\downarrow\downarrow\downarrow}m_{\uparrow\uparrow\uparrow}-m_{\downarrow\uparrow\downarrow}m_{\uparrow\downarrow\uparrow}-m_{\downarrow\downarrow\uparrow}m_{\uparrow\uparrow\downarrow}
+m_{\downarrow\uparrow\uparrow}m_{\uparrow\downarrow\downarrow})\Big],
\end{align}

\begin{align}
I_{AC}=&(m_{0\diamondsuit\uparrow}m_{\diamondsuit 0\downarrow}-m_{0\diamondsuit\downarrow}m_{\diamondsuit 0\uparrow})(m_{\uparrow 0\diamondsuit}m_{\downarrow\diamondsuit 0}-m_{\downarrow 0\diamondsuit}m_{\uparrow\diamondsuit 0})\nonumber\\
&\times\Big[2m_{0\uparrow\diamondsuit}m_{\diamondsuit\uparrow 0}(m_{\downarrow\downarrow\uparrow}m_{\uparrow\downarrow\downarrow}-m_{\downarrow\downarrow\downarrow}m_{\uparrow\downarrow\uparrow})+2m_{0\downarrow \diamondsuit}m_{\diamondsuit\downarrow 0}(m_{\downarrow\uparrow\uparrow}m_{\uparrow\uparrow\downarrow}-m_{\downarrow\uparrow\downarrow}m_{\uparrow\uparrow\uparrow})\nonumber\\
&\phantom{\times[]}+(m_{0\uparrow\diamondsuit}m_{\diamondsuit\downarrow 0}+m_{0\downarrow \diamondsuit}m_{\diamondsuit\uparrow 0})(m_{\downarrow\downarrow\downarrow}m_{\uparrow\uparrow\uparrow}-m_{\downarrow\uparrow\uparrow}m_{\uparrow\downarrow\downarrow}+
m_{\downarrow\uparrow\downarrow}m_{\uparrow\downarrow\uparrow}-m_{\downarrow\downarrow\uparrow}m_{\uparrow\uparrow\downarrow}
)\Big],
\end{align}
and

\begin{align}
I_{AB}=&(m_{0\uparrow\diamondsuit}m_{\diamondsuit\downarrow 0}-m_{0\downarrow\diamondsuit}m_{\diamondsuit\uparrow 0})(m_{\uparrow 0\diamondsuit}m_{\downarrow\diamondsuit 0}-m_{\downarrow 0\diamondsuit}m_{\uparrow\diamondsuit 0})\nonumber\\
&\times\Big[2m_{0\diamondsuit\uparrow}m_{\diamondsuit 0\uparrow}(m_{\downarrow\uparrow\downarrow}m_{\uparrow\downarrow\downarrow}-m_{\downarrow\downarrow\downarrow}m_{\uparrow\uparrow\downarrow})
+2m_{0\diamondsuit\downarrow }m_{\diamondsuit 0\downarrow}(m_{\downarrow\uparrow\uparrow}m_{\uparrow\downarrow\uparrow}-m_{\downarrow\downarrow\uparrow}m_{\uparrow\uparrow\uparrow})\nonumber\\
&\phantom{\times[]}+(m_{0\diamondsuit\uparrow}m_{\diamondsuit 0\downarrow}+m_{0 \diamondsuit\downarrow}m_{\diamondsuit 0\uparrow})(m_{\downarrow\downarrow\downarrow}m_{\uparrow\uparrow\uparrow}-m_{\downarrow\uparrow\uparrow}m_{\uparrow\downarrow\downarrow}-
m_{\downarrow\uparrow\downarrow}m_{\uparrow\downarrow\uparrow}+
m_{\downarrow\downarrow\uparrow}m_{\uparrow\uparrow\downarrow}
)\Big].
\end{align}
\end{widetext}

Any state for which $I_{BC}$ is nonzero displays nonlocal correlations between B and C in addition to the nonlocality over all bipartitons. If the state is projected onto the subspace where each party has a fermion, it is a separable state for A|BC but not for any other bipartition. Thus, if $I_{BC}$ is nonzero there exist local measurements with nonlocal correlations for parties B and C. 

The invariants $I_{AC}$ and $I_{AB}$ have the same meaning for the two other pairs of parties.
Furthermore, the three invariants $I_{AC},I_{BC}$, and $I_{AB}$ are independent of $I^{(1)}$ and $I^{(2)}$ in the sense that the set of invariants $\{I_{AB},I_{BC},I_{AC},I_1^2,I_2^2,I_1I_2\}$ does not have any element which is a linear combination of the other elements.

The two degree twelve invariants were chosen as

\begin{widetext}
\begin{align}
I_{ABC}^{(1)}=&(m_{0\diamondsuit\uparrow}m_{\diamondsuit 0\downarrow}-m_{0\diamondsuit\downarrow}m_{\diamondsuit 0\uparrow})(m_{0\uparrow\diamondsuit}m_{\diamondsuit\downarrow 0}-m_{0\downarrow\diamondsuit}m_{\diamondsuit\uparrow 0})
(m_{\uparrow 0\diamondsuit}m_{\downarrow\diamondsuit 0}-m_{\downarrow 0\diamondsuit}m_{\uparrow\diamondsuit 0})\nonumber\\
&\times\Big[2m_{\uparrow 0\diamondsuit}m_{0\diamondsuit\uparrow}(m_{\diamondsuit\downarrow 0}m_{\downarrow\uparrow\uparrow}-m_{\diamondsuit\uparrow 0}m_{\downarrow\downarrow\uparrow}) (m_{\downarrow\uparrow\downarrow}m_{\uparrow\downarrow\downarrow} - m_{\downarrow\downarrow\downarrow}m_{\uparrow\uparrow\downarrow})\nonumber\\
&\phantom{\times[]}+2m_{\uparrow 0\diamondsuit}m_{0\diamondsuit\downarrow}(m_{\diamondsuit\downarrow 0} m_{\downarrow\uparrow\downarrow}-m_{\diamondsuit\uparrow 0}m_{\downarrow\downarrow\downarrow}) (m_{\downarrow\uparrow\uparrow}m_{\uparrow\downarrow\uparrow} - m_{\downarrow\downarrow\uparrow}m_{\uparrow\uparrow\uparrow})\nonumber\\
&\phantom{\times[]}+2m_{\downarrow 0\diamondsuit}m_{0\diamondsuit\uparrow}(m_{\diamondsuit\downarrow 0} m_{\uparrow\uparrow\uparrow}-m_{\diamondsuit\uparrow 0}m_{\uparrow\downarrow\uparrow})
            (m_{\downarrow\downarrow\downarrow} m_{\uparrow\uparrow\downarrow}-m_{\downarrow\uparrow\downarrow}m_{\uparrow\downarrow\downarrow})\nonumber\\
&\phantom{\times[]}+2m_{\downarrow 0\diamondsuit}m_{0\diamondsuit\downarrow}(m_{\diamondsuit\downarrow 0}m_{\uparrow\uparrow\downarrow}-m_{\diamondsuit\uparrow{0}}m_{\uparrow\downarrow\downarrow}
)(m_{\downarrow\downarrow\uparrow}m_{\uparrow\uparrow\uparrow}-m_{\downarrow\uparrow\uparrow}m_{\uparrow\downarrow\uparrow})\nonumber\\
&\phantom{\times[]}+m_{\uparrow 0\diamondsuit}m_{0\diamondsuit\downarrow}(m_{\diamondsuit\downarrow 0}m_{\downarrow\uparrow\uparrow}-m_{\diamondsuit\uparrow 0}m_{\downarrow\downarrow\uparrow})(m_{\downarrow\downarrow\uparrow}m_{\uparrow\uparrow\downarrow} +m_{\downarrow\downarrow\downarrow}m_{\uparrow\uparrow\uparrow}-m_{\downarrow\uparrow\uparrow}m_{\uparrow\downarrow\downarrow} - m_{\downarrow\uparrow\downarrow}m_{\uparrow\downarrow\uparrow})\nonumber\\
&\phantom{\times[]}+m_{\uparrow 0\diamondsuit}m_{0\diamondsuit\uparrow}(m_{\diamondsuit\downarrow 0} m_{\downarrow\uparrow\downarrow}-m_{\diamondsuit\uparrow 0}m_{\downarrow\downarrow\downarrow})( m_{\downarrow\downarrow\uparrow}m_{\uparrow\uparrow\downarrow}+ 
           m_{\downarrow\downarrow\downarrow}m_{\uparrow\uparrow\uparrow}-m_{\downarrow\uparrow\uparrow}m_{\uparrow\downarrow\downarrow}- m_{\downarrow\uparrow\downarrow}m_{\uparrow\downarrow\uparrow})
           \nonumber\\
&\phantom{\times[]}+m_{\downarrow 0\diamondsuit}m_{0\diamondsuit\downarrow}(m_{\diamondsuit\downarrow 0} m_{\uparrow\uparrow\uparrow}-m_{\diamondsuit\uparrow 0}m_{\uparrow\downarrow\uparrow})(m_{\downarrow\uparrow\uparrow} m_{\uparrow\downarrow\downarrow} + m_{\downarrow\uparrow\downarrow}m_{\uparrow\downarrow\uparrow} - m_{\downarrow\downarrow\uparrow}m_{\uparrow\uparrow\downarrow} - 
m_{\downarrow\downarrow\downarrow}m_{\uparrow\uparrow\uparrow})\nonumber\\ 
&\phantom{\times[]}+m_{\downarrow 0\diamondsuit}m_{0\diamondsuit\uparrow}(m_{\diamondsuit\downarrow 0}m_{\uparrow\uparrow\downarrow}-m_{\diamondsuit\uparrow{0}}m_{\uparrow\downarrow\downarrow}
)(m_{\downarrow\uparrow\uparrow}m_{\uparrow\downarrow\downarrow}+ m_{\downarrow\uparrow\downarrow}m_{\uparrow\downarrow\uparrow}- m_{\downarrow\downarrow\uparrow}m_{\uparrow\uparrow\downarrow}- 
           m_{\downarrow\downarrow\downarrow}m_{\uparrow\uparrow\uparrow})\Big],
\end{align}
and
\begin{align}
I_{ABC}^{(2)}=&(m_{0\diamondsuit\uparrow}m_{\diamondsuit 0\downarrow}-m_{0\diamondsuit\downarrow}m_{\diamondsuit 0\uparrow})(m_{0\uparrow\diamondsuit}m_{\diamondsuit\downarrow 0}-m_{0\downarrow\diamondsuit}m_{\diamondsuit\uparrow 0})
(m_{\uparrow 0\diamondsuit}m_{\downarrow\diamondsuit 0}-m_{\downarrow 0\diamondsuit}m_{\uparrow\diamondsuit 0})\nonumber\\
&\times\Big[2m_{\uparrow\diamondsuit 0}m_{\diamondsuit 0\uparrow}(m_{0\downarrow\diamondsuit}m_{\downarrow\uparrow\uparrow}-m_{0\uparrow \diamondsuit}m_{\downarrow\downarrow\uparrow}) (m_{\downarrow\uparrow\downarrow}m_{\uparrow\downarrow\downarrow} - m_{\downarrow\downarrow\downarrow}m_{\uparrow\uparrow\downarrow})\nonumber\\
&\phantom{\times[]}+2m_{\uparrow\diamondsuit 0}m_{\diamondsuit 0\downarrow}(m_{0\downarrow \diamondsuit} m_{\downarrow\uparrow\downarrow}-m_{0\uparrow\diamondsuit}m_{\downarrow\downarrow\downarrow}) (m_{\downarrow\uparrow\uparrow}m_{\uparrow\downarrow\uparrow} - m_{\downarrow\downarrow\uparrow}m_{\uparrow\uparrow\uparrow})\nonumber\\
&\phantom{\times[]}+2m_{\downarrow\diamondsuit 0}m_{\diamondsuit 0\uparrow}(m_{0\downarrow\diamondsuit} m_{\uparrow\uparrow\uparrow}-m_{0\uparrow \diamondsuit}m_{\uparrow\downarrow\uparrow})
            (m_{\downarrow\downarrow\downarrow} m_{\uparrow\uparrow\downarrow}-m_{\downarrow\uparrow\downarrow}m_{\uparrow\downarrow\downarrow})\nonumber\\
&\phantom{\times[]}+2m_{\downarrow\diamondsuit 0}m_{\diamondsuit 0\downarrow}(m_{0\downarrow\diamondsuit}m_{\uparrow\uparrow\downarrow}-m_{0\uparrow\diamondsuit}m_{\uparrow\downarrow\downarrow}
)(m_{\downarrow\downarrow\uparrow}m_{\uparrow\uparrow\uparrow}-m_{\downarrow\uparrow\uparrow}m_{\uparrow\downarrow\uparrow})\nonumber\\
&\phantom{\times[]}+m_{\uparrow\diamondsuit 0}m_{\diamondsuit 0\downarrow}(m_{0\downarrow \diamondsuit}m_{\downarrow\uparrow\uparrow}-m_{0\uparrow \diamondsuit}m_{\downarrow\downarrow\uparrow})(m_{\downarrow\downarrow\uparrow}m_{\uparrow\uparrow\downarrow} +m_{\downarrow\downarrow\downarrow}m_{\uparrow\uparrow\uparrow}-m_{\downarrow\uparrow\uparrow}m_{\uparrow\downarrow\downarrow} - m_{\downarrow\uparrow\downarrow}m_{\uparrow\downarrow\uparrow})\nonumber\\
&\phantom{\times[]}+m_{\uparrow\diamondsuit 0}m_{\diamondsuit 0\uparrow}(m_{0\downarrow \diamondsuit} m_{\downarrow\uparrow\downarrow}-m_{0\uparrow \diamondsuit}m_{\downarrow\downarrow\downarrow})( m_{\downarrow\downarrow\uparrow}m_{\uparrow\uparrow\downarrow}+ 
           m_{\downarrow\downarrow\downarrow}m_{\uparrow\uparrow\uparrow}-m_{\downarrow\uparrow\uparrow}m_{\uparrow\downarrow\downarrow}- m_{\downarrow\uparrow\downarrow}m_{\uparrow\downarrow\uparrow})
           \nonumber\\
&\phantom{\times[]}+m_{\downarrow\diamondsuit 0}m_{\diamondsuit 0\downarrow}(m_{0\downarrow\diamondsuit} m_{\uparrow\uparrow\uparrow}-m_{0\uparrow \diamondsuit}m_{\uparrow\downarrow\uparrow})(m_{\downarrow\uparrow\uparrow} m_{\uparrow\downarrow\downarrow} + m_{\downarrow\uparrow\downarrow}m_{\uparrow\downarrow\uparrow} - m_{\downarrow\downarrow\uparrow}m_{\uparrow\uparrow\downarrow} - 
m_{\downarrow\downarrow\downarrow}m_{\uparrow\uparrow\uparrow})\nonumber\\ 
&\phantom{\times[]}+m_{\downarrow\diamondsuit 0}m_{\diamondsuit 0\uparrow}(m_{0\downarrow\diamondsuit}m_{\uparrow\uparrow\downarrow}-m_{0\uparrow\diamondsuit}m_{\uparrow\downarrow\downarrow}
)(m_{\downarrow\uparrow\uparrow}m_{\uparrow\downarrow\downarrow}+ m_{\downarrow\uparrow\downarrow}m_{\uparrow\downarrow\uparrow}- m_{\downarrow\downarrow\uparrow}m_{\uparrow\uparrow\downarrow}- 
           m_{\downarrow\downarrow\downarrow}m_{\uparrow\uparrow\uparrow})\Big].
\end{align}
\end{widetext}
A state for which $I_{ABC}^{(1)}$ or $I_{ABC}^{(2)}$ is nonzero features Bell-nonlocality between any pair of parties. A projection of such a state onto the subspace where there is one particle with each party is in either the GHZ SLOCC interconvertibility class or in the W SLOCC interconvertibility class of three distinguishable spin-$\frac{1}{2}$ particles. These two SLOCC classes are the only two classes for such a system where the states in the class are not separable for any partition \cite{dur}. 
Moreover, the two invariants $I_{ABC}^{(1)}$ and $I_{ABC}^{(2)}$ are independent of the lower degree invariants in the sense that the set of invariants $\{I_{ABC}^{(1)},I_{ABC}^{(2)}, I_2I_{AB},I_2I_{BC},I_2I_{AC},I_1I_{AB},I_1I_{BC},I_1I_{AC},$ $I_1^3,I_2^3,I_1^2I_2,I_2^2I_1\}$ does not have any element which is a linear combination of the other elements.

Note that for states that are symmetric under permutations of the spatial modes, the invariants $I_{AB},I_{BC},I_{AC},I_{ABC}^{(1)},I_{ABC}^{(2)}$ take the value zero.

Measures corresponding to the individual invariants can be constructed as $|I^{(1)}|^{1/2},|I^{(2)}|^{1/2},|I_{AB}|^{1/4},$ $|I_{BC}|^{1/4},|I_{AC}|^{1/4},|I_{ABC}^{(1)}|^{1/6}$, and $|I_{ABC}^{(2)}|^{1/6}$, respectively. But any homogeneous polynomial formed by combining the invariants can be used to construct a measure. For example, the invariants $I^{(1)}+I^{(2)}$ and $I_{AB}+I_{AC}+I_{BC}$ can be used to construct measures $|I^{(1)}+I^{(2)}|^{1/2}$ and $|I_{AB}+I_{AC}+I_{BC}|^{1/4}$ that are invariant under permutations of the spatial modes.

\subsection{Examples of maximally entangled states}
To better understand the different types of entanglement parametrised by the invariants, we can consider a few examples of maximally entangled states for which only one single SLOCC invariant takes a nonzero value.
A maximally entangled state for which $I^{(2)}\neq 0$ while $I^{(1)}=I_{AB}=I_{BC}=I_{AC}=I_{ABC}^{(1)}=I_{ABC}^{(2)}=0$ is

\begin{eqnarray}
\frac{1}{2}\big(|\!\uparrow\!\diamondsuit 0\rangle+|0\!\uparrow\!\diamondsuit\rangle+|\diamondsuit 0{\uparrow}\rangle+|\!\downarrow\downarrow\downarrow\rangle\big).
\end{eqnarray}
Thus, this state only has entanglement of the type distinguished by $I^{(2)}$.
For the tri-partitioning A|B|C no measurements can distinguish this state from a separable state. However, entanglement is detectable for any bipartition of the system, i.e., for A|BC, B|AC, and C|AB. Moreover, the measure $|I^{(2)}|^{1/2}$ takes its maximal value $1/4$ for this state. A state for which only $I^{(1)}$ is nonzero can be found by permuting any pair of the parties.

In a similar way we can construct a state for which the only nonzero invariant is $I_{AB}$,
 
\begin{align}
\frac{1}{\sqrt{8}}\big(&|\!\uparrow\!\diamondsuit 0\rangle+|0{\uparrow}\diamondsuit\rangle+|\diamondsuit 0{\uparrow}\rangle
+|0\diamondsuit \!\uparrow\rangle\nonumber\\&+|\!\downarrow\!0\diamondsuit\rangle+|\diamondsuit{\downarrow}0\rangle
+|\!\downarrow\uparrow\downarrow\rangle+|\!\uparrow\downarrow\downarrow\rangle\big).
\end{align}
This state  contains an amplitude of a Bell-state between A and B, $1/\sqrt{2}(|\!\downarrow\uparrow\rangle+|\!\uparrow\downarrow\rangle)\otimes|\!\downarrow\rangle$. Therefore apart from Bell-nonlocality across every bipartition, Bell-nonlocality is also exhibited between A and B but not between A and C or B and C. States that give nonzero values only to $I_{AC}$ or $I_{BC}$ can be constructed by permuting parties.

A state for which only $I_{ABC}^{(1)}$ is nonzero is
\begin{align}
\frac{1}{\sqrt{12}}\big[&\sqrt{2}(|\!\uparrow\! 0 \diamondsuit\rangle+|0\diamondsuit{\uparrow}\rangle+|\diamondsuit {\uparrow}0\rangle)\nonumber\\
&+|\diamondsuit 0\!\downarrow\rangle+|0\!\downarrow\!\diamondsuit\rangle+|\!\downarrow\!\diamondsuit 0\rangle\nonumber\\
&+|\!\downarrow\uparrow\downarrow\rangle+|\!\uparrow\downarrow\downarrow\rangle+|\!\downarrow\downarrow\uparrow\rangle\big].
\end{align}
This state contains an amplitude of the $W$ state $1/\sqrt{3}(|\!\downarrow\uparrow\downarrow\rangle+|\!\uparrow\downarrow\downarrow\rangle+|\!\downarrow\downarrow\uparrow\rangle)$ and therefore features nonlocality for every pair of parties. A similar state for which only $I_{ABC}^{(2)}$ is non-zero can be constructed by permuting any two parties.

\subsection{Arbitrary number of fermions and arbitrary spin}

In Sect. \ref{ent} the relation between SLOCC invariants and Bell-nonlocality, was given in Theorem \ref{t1}. Furthermore, in Theorem \ref{lut} the relation between the number of fermions, $m$, and the number of spatial modes, $n$, that is needed for the existence of entanglement measures and maximally entangled states was given. 
These relations can be generalized to higher spin and to larger numbers of fermions. 

First we consider the generalization of Theorem \ref{lut}.

\begin{theorem}
\label{thee2}
In a system of $n$ spatial modes and $m$ delocalized spin-$\frac{p}{2}$ fermions, where $m\geq{p+1}$, entanglement measures and maximally entangled states exist if and only if $m/n=(p+1)/2$.
\end{theorem}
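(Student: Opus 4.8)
The plan is to prove the two implications separately, mirroring Theorem \ref{lut} but replacing the single $4\times 4$ block-scalar matrix used there with its spin-$\frac{p}{2}$ analogue. For a spin-$\frac{p}{2}$ fermion each mode carries $N:=p+1$ internal states, so the local Hilbert space is graded by occupation number $k=0,1,\dots,N$ with block dimensions $\binom{N}{k}$ and total dimension $2^{N}$, and the particle-conserving local operations are block diagonal with respect to this grading with overall determinant one. I would establish the forbidden (``only if'') direction by exhibiting a one-parameter central subgroup that rescales every state, and the existence (``if'') direction by producing a state whose single-party reduced density matrices are all maximally mixed, invoking the equivalence between such maximally entangled states and nonzero invariants furnished by the Kempf--Ness theorem \cite{kempfness} and Ref. \cite{verstraete2003}.

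For the ``only if'' part I would define the local traceless diagonal generator $D$ acting as the scalar $\frac{p+1}{2}-k$ on the $k$-particle block, and verify it lies in the local SLOCC algebra by checking tracelessness,
\begin{equation}
\sum_{k=0}^{p+1}\Big(\tfrac{p+1}{2}-k\Big)\binom{p+1}{k}=\tfrac{p+1}{2}\,2^{p+1}-(p+1)2^{p}=0 .
\end{equation}
Writing $h(t)=e^{tD}$, the element $h(t)^{\times n}\in G_{{\mbox{\tiny\itshape SLOCC}}}$ multiplies any basis vector with occupations $k_1,\dots,k_n$ summing to $m$ by $\prod_j e^{t(\frac{p+1}{2}-k_j)}=e^{t(n\frac{p+1}{2}-m)}$, independent of the configuration, so it acts on the whole $m$-fermion space as the scalar $e^{t(n(p+1)/2-m)}$. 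Exactly as in Theorem \ref{lut}, if $m/n\neq(p+1)/2$ this factor differs from one for $t\neq 0$: a homogeneous invariant of degree $d$ would have to satisfy $I=e^{dt(n(p+1)/2-m)}I$ for all $t$ and hence vanish, while the norm can be driven to zero so that no maximally entangled state survives in the orbit. This settles the necessity of $m/n=(p+1)/2$.

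For the ``if'' part I would show that when $2m=n(p+1)$ the action admits a polystable vector, which by Kempf--Ness is equivalent to the existence both of a maximally entangled state and of a nonzero invariant. The balance condition is precisely the statement that half-filling $m/n=(p+1)/2=N/2$ is attainable, and the hypothesis $m\geq p+1$ guarantees (together with $2m=nN$, which forces $n\geq 2$) that across a superposition each mode can range over the full set of occupations from $0$ to $N$; this is what makes a full-rank, maximally mixed single-party reduced density matrix geometrically possible. Concretely I would either (a) construct an explicit state generalizing the half-filling examples of Sec. \ref{ferm} and the three-fermion case, taking a weighted superposition invariant under cyclic permutation of the modes and checking directly that every $\rho_j$ equals $2^{-N}I$, or (b) argue via the Hilbert--Mumford criterion that the vanishing of the central scaling found above places the origin in the relative interior of the moment polytope of the $G_{{\mbox{\tiny\itshape SLOCC}}}$-action, so that the null cone is a proper subvariety and semistable vectors exist.

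The hard part will be this existence direction. In route (a) the obstacle is verifying, for general $N$ and $n$, that the chosen superposition produces maximally mixed marginals in every occupation block simultaneously; in route (b) it is showing that removing the single central one-parameter subgroup already suffices, i.e. that once $2m=n(p+1)$ no other one-parameter subgroup of $G_{{\mbox{\tiny\itshape SLOCC}}}$ destabilizes every state. I expect the explicit construction to be the more transparent route, with the cyclic-symmetry/half-filling ansatz doing the work, and the moment-polytope argument serving as the conceptual reason why such an ansatz can always be balanced.
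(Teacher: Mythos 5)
Your necessity argument is correct and is essentially the paper's: the paper uses the block-diagonal determinant-one matrix whose $k$-particle block is the scalar $e^{(1-\frac{2k}{p+1})\phi}$, so that the $n$-fold product scales every $m$-fermion basis vector by $e^{[n-2m/(p+1)]\phi}$; your generator $D$ is the same central one-parameter subgroup up to a reparametrization of the exponent, and the conclusion that homogeneous invariants must vanish and that the orbit contains vectors of arbitrarily small norm unless $m/n=(p+1)/2$ is identical.

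The gap is in the existence direction, which you correctly identify as the hard part but do not close. Route (b) is not a proof as stated: semistability requires the Hilbert--Mumford criterion to hold for \emph{every} one-parameter subgroup of the local group, not merely the central one you exhibited, and you give no argument excluding other destabilizing subgroups once $2m=n(p+1)$. Route (a) is the paper's route, but the paper's construction rests on a specific combinatorial choice that your ``cyclic-symmetry/half-filling ansatz'' leaves unspecified: take the number of modes equal to the local Hilbert-space dimension $2^{p+1}$ (or an integer multiple $r2^{p+1}$), list the $2^{p+1}$ local basis states as a sequence $s(p+1)=0,1,\dots,2^{p+1}-1$ assigned one per mode, and superpose the $2^{p+1}$ cyclic shifts of this assignment with equal amplitudes. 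Each local basis state then occurs in each mode exactly once across the superposition, the complementary $(n-1)$-mode factors $|\theta_i\rangle$ in Eq.~\ref{label} are mutually orthogonal because they are distinct products of basis vectors, and hence every single-mode reduced density matrix is exactly $2^{-(p+1)}I$; the total particle number is $\sum_{k}k\binom{p+1}{k}=(p+1)2^{p}$ in $2^{p+1}$ modes, which is precisely the required filling. Without tying $n$ to $2^{p+1}$ in this way, a cyclically symmetric superposition has no reason to visit every local basis state with equal weight in every mode, which is exactly the verification you flag as the obstacle. So your proposal reproduces one half of the paper's proof and correctly anticipates the strategy for the other half, but the decisive construction --- and with it the ``if'' direction --- is missing.
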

\begin{proof}
To show necessity of the relation $m/n=(p+1)/2$ we construct the generalization of the proof of Theorem \ref{lut}.
Consider $m$ delocalized spin-$\frac{p}{2}$ fermions in a system of $n$ spatial modes.

Each fermion has $p+1$ internal degrees of freedom. Therefore the subspace of the local Hilbert space corresponding to one fermion in the spatial mode has dimension $p+1$. The subspace corresponding to $k$ fermions in a spatial mode has a dimension given by the binomal coefficient $\binom{p+1}{k}$ due to the Pauli exclusion principle. The full Hilbert space of a spatial mode therefore has dimension $\sum_{k=0}^{p+1}\binom{p+1}{k}=2^{p+1}$.  

Consider the diagonal $2^{p+1}\times 2^{p+1}$ matrix $g$ acting on the local Hilbert space of some spatial mode

\begin{eqnarray}g=\left( \begin{array}{cccccc}
\! {\bf B}_{0} & \! 0 & 0 & \cdots & 0 & 0\\
\! 0 & {\bf B}_{{1}} & 0 & \cdots & 0 & 0\\
\! 0 & 0 &  \ddots &  & \vdots & \vdots \\
\! \vdots & \vdots & & \ddots &  0 & 0\\
\! 0 &  0   & \cdots & 0 & {\bf B}_{p} & 0\\
\! 0 & 0  & \cdots & 0 & 0 & {\bf B}_{p+1}\end{array} \!\right),\nonumber
\end{eqnarray}
where ${\bf B}_{k}$ is a $\binom{p+1}{k}\times\binom{p+1}{k}$, block corresponding to $k$ fermions in the spatial mode, where all off-diagonal elements are zero and all diagonal elements are equal to $e^{(1-\frac{2k}{p+1})\phi}$ for $\phi\in\mathbb{C}$.

The operation $g^{\times{n}}$ belongs to the $G_{{\mbox{\tiny\itshape SLOCC}}}$ group for spin-$\frac{p}{2}$ fermions since $det(g)=1$.
The action of $g^{\times{n}}$ on any basis vector of the Hilbert space depends only on the number $m$ of fermions and the number $n$ of spatial modes. More precisely, this action is a scaling factor $e^{[n-2m/(p+1)]\phi}$ of the state vector.

It follows that for $Re(\phi)\neq{0}$ it is impossible to construct polynomials in the state vector coefficients that are invariant under this action unless $m/n=(p+1)/2$. 

For the sufficiency of the relation $m/n=(p+1)/2$ we make an explicit construction of maximally entangled states for arbitrary $n,m,$ and $p$ satisfying the relation.
For a given $p$, consider the sequence $s(p+1)\equiv 0,1,2,\dots,2^{p+1}-3,2^{p+1}-2,2^{p+1}-1$, where $0$ represents the absence of a fermion and the other entries represent the different basis vectors of the local Hilbert space where one or more fermions occupy the mode. Let $\sigma$ be the cyclic permutation defined by
$\sigma(0,1,2,\dots,2^{p+1}-3,2^{p+1}-2,2^{p+1}-1)=2^{p+1}-1,0,1,2,\dots,2^{p+1}-3,2^{p+1}-2$. Furthermore, let $\sigma^{k}$ denote the composition of $k$ such cyclic permutations.

Each sequence of this kind corresponds to a Hilbert space vector which is a product of local states as given by the sequence.
Given this we can define a state $|\Psi_{\sigma}^{p+1}\rangle$ by
\begin{eqnarray}
|\Psi_{\sigma}^{p+1}\rangle=\frac{1}{\sqrt{2^{p+1}}}\sum_{i=1}^{2^{p+1}}|\sigma^{i}(0,1,\dots,2^{p+1}-2,2^{p+1}-1)\rangle,\nonumber
\end{eqnarray}
which is a maximally entangled state of $2^{p}(p+1)$ fermions with spin $\frac{p}{2}$ delocalized over $2^{p+1}$ spatial modes. To verify this we calculate the single-party reduced density matrices and see that they are all maximally mixed.
For any spatial mode the state $|\Psi_{\sigma}^{p+1}\rangle$ can be written as 

\begin{eqnarray}\label{label}
|\Psi_{\sigma}^{p+1}\rangle=\frac{1}{\sqrt{2^{p+1}}}\sum_{i=0}^{2^{p+1}-1}|i\rangle|\theta_i\rangle,
\end{eqnarray}
where $\langle\theta_i|\theta_j\rangle=0$ if $i\neq j$. Thus, the single spatial mode reduced density matrices are diagonal. Moreover, since each term in Eq. \ref{label} comes with the same coefficient $1/\sqrt{2^{p+1}}$, each of the reduced density matrices are proportional to the identity.

To construct states of $r2^{p}(p+1)$ fermions with spin $\frac{p}{2}$ delocalized over $r2^{p+1}$ spatial modes we form a sequence $s(r,p+1)\equiv s(p+1)s(p+1)\dots s(p+1)$ by concatenating $r$ copies of $s(p+1)$, i.e., $s(r,p+1)\equiv 0,1,\dots,2^{p+1}-2,2^{p+1}-1,0,1,\dots,2^{p+1}-2,2^{p+1}-1,0,1,\dots$. From this sequence we then define the state

\begin{eqnarray}
|\Psi_{\sigma}^{r,p+1}\rangle=\frac{1}{\sqrt{2^{p+1}}}\sum_{i=1}^{2^{p+1}}|\sigma^{i}[s(r,p+1)]\rangle.\nonumber
\end{eqnarray}

In this way we can construct maximally entangled states for any $n,m,$ and $p$ that satisfies $m/n=(p+1)/2$.
\end{proof}
The simplifying assumption $m\geq{p+1}$ is made to ensure that the local Hilbert space of each spatial mode is not constrained by the total particle number in the system.
From Theorem \ref{thee2} we can conclude that, given $m\geq{p+1}$, the particle species determines the required ratio of particles to the number of spatial modes for the existence of maximal entanglement.
Since for fermions $p$ is an odd number, the filling fraction $m/n=(p+1)/2$ of the spatial modes is always an integer. 

Next we consider the generalization of Theorem \ref{t1}.

\begin{theorem}\label{t1g}
Let $|\psi\rangle$ be a state of $m$ delocalized spin-$\frac{p}{2}$ fermions in $n$ modes where $m/n=(p+1)/2$. If $|\psi\rangle$ is Bell-local over a bipartition of the system into a single mode and $n-1$ modes, there exist no entanglement monotones that take a nonzero value.
This implies that $|\psi\rangle$ cannot be converted to a maximally entangled state by SLOCC. Moreover, local special unitary operations can induce an arbitrary global phase shift of the state vector $|\psi\rangle$. Thus, if an entanglement monotone takes a non-zero value the state is Bell-nonlocal over every bi-partition of this type.
\end{theorem}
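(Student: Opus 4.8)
The plan is to follow the template of the proof of Theorem~\ref{t1}, replacing the four-dimensional local Hilbert space of a spin-$\frac{1}{2}$ mode by the $2^{p+1}$-dimensional local space of a spin-$\frac{p}{2}$ mode, and the three-block structure by the full particle-number grading into sectors of dimension $\binom{p+1}{k}$, $k=0,1,\dots,p+1$. Write the distinguished single mode as $A$ and the remaining $n-1$ modes collectively as $R$. The first step is to establish a Bell-local normal form for $A$. The superselection rule forces every allowed local operation on $A$ to be block diagonal with respect to the particle-number sectors, so the party holding $A$ can perform two incompatible measurements only \emph{within} a single sector of dimension $\geq 2$, exactly as in Lemma~\ref{lemm1}. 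By the result of Refs.~\cite{gisin,popescu3}, the projection of $|\psi\rangle$ onto such a sector certifies Bell-nonlocal correlations across $A|R$ unless it is a product state. Hence Bell-locality over $A|R$ implies that, after a local unitary acting within each sector, $|\psi\rangle$ has support on at most one basis vector of $A$ per sector, i.e.\ $|\psi\rangle=\sum_{k} c_k\,|v_k\rangle_A|\chi_k\rangle_R$ with a single $|v_k\rangle$ in each occupied sector.

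Next I would exhibit a one-parameter subgroup of the local action of $G_{{\mbox{\tiny\itshape SLOCC}}}$ on $A$ that scales this state. Let $N_{\mathrm{app}}\le p+2$ denote the number of occupied sectors. Since $2^{p+1}>p+2$ for every $p\ge 1$, there is at least one basis direction of $A$ on which $|\psi\rangle$ has no support (for $p=1$ this is the single unused direction $|\!\downarrow\rangle$ of the spin-$\frac{1}{2}$ proof). I would choose a real, traceless, diagonal matrix $D$ on the local Hilbert space taking a common nonzero value $\mu$ on each appearing direction $|v_k\rangle$ and compensating values on the unused direction(s) so that $\mathrm{tr}\,D=0$; this is possible precisely because an unused direction exists. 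Then $g=e^{\phi D}$ satisfies $\det g=e^{\phi\,\mathrm{tr}\,D}=1$, so $g\times I^{\times(n-1)}\in G_{{\mbox{\tiny\itshape SLOCC}}}$, and it acts on $|\psi\rangle$ as multiplication by $e^{\mu\phi}$. Taking $\mathrm{Re}(\phi)\to\infty$ with the sign of $\mu$ chosen appropriately drives the norm of the orbit to zero, so the orbit closure contains the zero vector. By the Kempf--Ness theorem~\cite{kempfness} and Theorem~1 of Ref.~\cite{verstraete2003} there is then no maximally entangled state in the SLOCC class of $|\psi\rangle$; and since no homogeneous polynomial is invariant under a nontrivial uniform scaling of the coefficients, every polynomial $G_{{\mbox{\tiny\itshape SLOCC}}}$ invariant, and hence every monotone $|I|^{2/k}$, vanishes on $|\psi\rangle$.

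For the ``moreover'' statement I would reuse $D$ with an imaginary exponent. The operator $g_U=e^{i\phi D}$ is special unitary, since $\det g_U=e^{i\phi\,\mathrm{tr}\,D}=1$, and $g_U\times I^{\times(n-1)}$ multiplies $|\psi\rangle$ by the global phase $e^{i\mu\phi}$; equivalently $e^{-i\mu\phi}\,g_U\times I^{\times(n-1)}$ lies in the stabilizer of $|\psi\rangle$. Varying $\phi$ thus realizes an arbitrary global phase shift by a local special unitary. The contrapositive then gives the final claim of the theorem: if some entanglement monotone takes a nonzero value, $|\psi\rangle$ cannot have the above normal form for any single-mode bipartition and must be Bell-nonlocal over every bipartition of the form $A|R$.

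The main obstacle is the first step, the generalized normal-form lemma. For spin-$\frac{1}{2}$ this was the content of Lemma~\ref{lemm1}, where only the one-fermion sector was multi-dimensional; for general $p$ one must argue simultaneously over all the sectors $1\le k\le p$ and verify carefully that coherences \emph{between} different sectors cannot be exploited to certify nonlocality, since such coherences are invisible to the block-diagonal local operations and leave the measurement statistics indistinguishable from those of the corresponding incoherent mixture. Only once this structural reduction is in place does Bell-locality genuinely become equivalent to each sector being a product. After that, the scaling and phase constructions are direct analogues of the spin-$\frac{1}{2}$ computation, the sole quantitative input being the dimension count $2^{p+1}>p+2$ that guarantees the unused local direction needed to absorb the determinant.
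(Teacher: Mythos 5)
Your proposal is correct and follows essentially the same route as the paper's proof: the same sector-by-sector reduction of Bell-locality to a normal form with one local basis vector per particle-number sector, the same determinant-one local scaling operator driving the orbit to zero combined with Kempf--Ness and Theorem~1 of Ref.~\cite{verstraete2003}, and the same imaginary-exponent argument for the global phase. The only difference is cosmetic: the paper exhibits an explicit diagonal matrix $g(\alpha)$ with eigenvalues $e^{(1\pm 2^{p})\alpha}$ and $e^{\alpha}$, whereas you build a generic traceless diagonal generator supported on the occupied directions, using the same dimension count that guarantees an unused direction to absorb the trace.
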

\begin{proof}

For a partitioning into a single spatial mode and $n-1$ modes, a general state $|\psi\rangle$ can be written as
\begin{eqnarray}|\psi\rangle=\sum_{k=1}^{p+1}\sum_{i_{k}=1}^{\binom{p+1}{k}}r_{i_{k}}|i_{k}\rangle|\theta_{i_{k}}\rangle+r_0|0\rangle|\phi\rangle,\nonumber\end{eqnarray} 
where the $|i_k\rangle$ is a basis of the subspace of the local Hilbert space of the single mode corresponding to $k$ fermions in the mode, and $|0\rangle$ represents no fermion in the mode. The $|\theta_{i_{k}}\rangle$ are states of $m-k$ fermions in $n-1$ modes and $|\phi\rangle$ is a state of $m$ fermions in $n-1$ modes. Consider the projection of the state onto a subspace where $k$ fermions are localized in the single-mode part of the partitioning and $m-k$ in the remaining $n-1$ modes. Every state in the subspace spanned by the $|\theta_{i_{k}}\rangle$ contains the same number of fermions. Therefore, within this subspace the operations by a party with access to the $n-1$ partition are not restricted by the super-selection rule. Thus, nonlocal correlations in this superselection sector can be observed if and only if the projection of the state to this sector is entangled. 

Therefore, the state is Bell-nonlocal if for some $k$ the state satisfies $|\theta_{i_{k}}\rangle\neq|\theta_{j_{k}}\rangle$ and $r_{i_{k}}r_{j_{k}}\neq 0$ for some ${i_{k}}$ and ${j_{k}}$.
If the state is Bell-local it thus follows that $r_{i_{k}}r_{j_{k}}=0$ or $|\theta_{i_{k}}\rangle=|\theta_{j_{k}}\rangle$ for each pair of ${i_{k}},{j_{k}}$ and each $k$. Such a state is, up to local unitary operations, of the form $|\psi'\rangle=\sum_{k=1}^{p+1}s_{1_k}|1_{k}\rangle|\varphi_k\rangle+s_2|0\rangle|\phi\rangle$, where ${1_k}$ is a given state of $k$ fermions in the mode, and is indistinguishable from a separable state $\sum_{k=1}^{p+1}|s_{1_{k}}|^2|{1_k}\rangle\langle {1_k}|\otimes|\varphi_k\rangle\langle\varphi_k|+|s_2|^2|0\rangle\langle 0|\otimes|\phi\rangle\langle\phi|$.

Next consider the diagonal $2^{p+1}\times 2^{p+1}$ determinant one matrix

\begin{eqnarray}g(\alpha)=\left( \begin{array}{ccccccc}
\! {\bf B}_{1} & \! 0 & 0 & \cdots & 0 & 0 & 0\\
\! 0 & {\bf B}_{{2}} & 0 & \cdots & 0 & 0 & 0\\
\! 0 & 0 &  \ddots &  & \vdots & \vdots & \vdots\\
\! \vdots & \vdots & & \ddots &  0 & 0 & 0\\
\! 0 &  0   & \cdots & 0 & {\bf B}_{p} & 0 & 0\\
\! 0 & 0  & \cdots & 0 & 0 & {e^{(1-2^{p})\alpha}} & 0\\
\! 0 & 0  & \cdots & 0 & 0 & 0 & {e^{(1-2^{p})\alpha}}\end{array} \!\right),\nonumber
\end{eqnarray}
where the last two columns corresponds to the action on $|0\rangle$ and $|1_{p+1}\rangle$, respectively, where $|1_{p+1}\rangle$ is the state of $p+1$ fermions in the mode,
and each ${\bf B}_{k}$ is a $\binom{p+1}{k}\times{\binom{p+1}{k}}$ block

\begin{eqnarray}{\bf B}_{k}=\left( \begin{array}{ccccc}
\!\! e^{(1-2^{p})\alpha} & \! 0 & 0 & \cdots & 0 \\
\!\! 0 & e^{(1+2^{p})\alpha} & 0 & \cdots & 0 \\
\!\! 0 & 0 &  e^{\alpha} &  & \vdots  \\
\!\! \vdots & \vdots & & \ddots &  0 \\
\!\! 0 &  0   & \cdots & 0 & e^{\alpha} \\
\end{array} \!\right),\nonumber
\end{eqnarray}
where $\alpha\in\mathbb{C}$. Here the first column corresponds to the action on the local state $|1_k\rangle$.

For a state of the form  $\sum_{k=1}^{p+1}s_{1_k}|1_{k}\rangle|\varphi_k\rangle+s_2|0\rangle|\phi\rangle$ the SLOCC operation $g(\alpha)\times{I}\dots\times{I}$ scales the state vector by a factor of $|e^{(1-2^{p})\alpha}|$ and changes the phase by $\arg[e^{(1-2^{p})\alpha}]$. By making $Re(\alpha)$ sufficiently large we can bring the norm arbitrarily close to zero. This implies that the set of states which are inter-convertible with $|\psi'\rangle$ does not contain any maximally entangled states \cite{kempfness,verstraete2003} and that no polynomial SLOCC invariant exists. This follows since no homogeneous polynomial is invariant under a scaling of the state vector.
Thus, if there exist an entanglement monotone which takes a non-zero value for a state this state cannot have the form $\sum_{k=1}^{p+1}s_{1_k}|1_{k}\rangle|\varphi_k\rangle+s_2|0\rangle|\phi\rangle$ for any bipartition into a single mode and $n-1$ modes. Thus, the state must be Bell-nonlocal over every such bipartition.

The form of the state $|\psi'\rangle$ also implies that when $Re(\alpha)=0$ the SLOCC operation is special unitary and under this operation $g(\alpha)\times{I}\dots\times{I}$ the state accumulates a phase-factor $e^{(1-2^{p})\alpha}$.
Thus, $e^{(2^{p}-1)\alpha}g(\alpha)\times{I}\dots\times{I}$ is an element of the stabilizer group.

\end{proof}

Theorem \ref{t1g} shows that just as in the case of spin-$\frac{1}{2}$ fermions the existence of SLOCC invariants and monotones implies that there exist bipartitions for which the system is Bell-nonlocal.

\section{Additional constraints}
\label{sec4}
Section \ref{sec3} described entanglement measures 
in the case of unconstrained dynamics where any spatial configuration of the fermions compatible with the Pauli exclusion principle was allowed. We can however consider cases where the dynamics is more constrained. Here we consider the cases of strongly repulsive and strongly attractive interaction between the fermions and the case where one or more of the fermions are in fixed spatial locations.

In these cases the more constrained dynamics leads to a lower dimensional local Hilbert space for one or more of the spatial modes, and the group of local operations is therefore also constrained. This changes the notion of maximal entanglement as well as the role of invariants under the local operations and the corresponding measures.

\subsection{Strong repulsive or attractive interaction}

We first consider the case of spin-$\frac{1}{2}$ fermions with either strong repulsive or strong attractive interaction.
The relation between SLOCC invariants and Bell-nonlocality described in Theorem \ref{t1} carries over to the case of strongly repulsive interaction between the fermions. However, in the case of strongly attractive interaction the same simple relation does not hold.

\subsubsection{Strongly repulsive interaction}
In the limit of strong repulsive interaction between the fermions, double occupancies of a spatial mode are effectively not possible and the local Hilbert space of any mode is spanned by only $|\!\uparrow\rangle,|\!\downarrow\rangle,$ and $|0\rangle$.

A strong repulsion between the fermions can be caused by for example Coulomb interaction or in the case of composite fermions the Fermi-repulsion between the constituent particles \cite{dyson}.
If the composite Fermions have an effective size that is large compared to the size of the spatial modes the Fermi repulsion can prevent double occupancies.

Entanglement in the case of strong repulsion was considered in Ref. \cite{johansson15} for the case of two particles. With this constraint the degree of any SLOCC invariant polynomial is a multiple of three. In this case the full set of invariants is generated by only two invariants of degree 3 and 6, respectively. These are

\begin{align}\label{i1}
 I_1 =&(m_{\uparrow\uparrow{0}}m_{\downarrow\downarrow{0}}-m_{\uparrow\downarrow{0}}m_{\downarrow\uparrow{0}})\nonumber\\
 \times &(m_{0\uparrow\uparrow}m_{0\downarrow\downarrow}-m_{0\uparrow\downarrow}m_{0\downarrow\uparrow})\nonumber\\
 \times &(m_{\uparrow{0}\uparrow}m_{\downarrow{0}\downarrow}-m_{\uparrow{0}\downarrow}m_{\downarrow{0}\uparrow}),
\end{align}
and
\begin{align}\label{i4}
I_{2}=&m_{\uparrow\uparrow{0}}(m_{{0}\downarrow\downarrow}m_{\downarrow{0}\uparrow}-m_{{0}\downarrow\uparrow}m_{\downarrow{0}\downarrow})\nonumber\\+&m_{\uparrow\downarrow{0}}(m_{0\uparrow\uparrow}m_{\downarrow{0}\downarrow}-m_{{0}\uparrow\downarrow}m_{\downarrow{0}\uparrow})\nonumber\\+&m_{\downarrow\uparrow{0}}(m_{{0}\downarrow\uparrow}m_{\uparrow{0}\downarrow}-m_{{0}\downarrow\downarrow}m_{\uparrow{0}\uparrow})\nonumber\\+&m_{\downarrow\downarrow{0}}(m_{{0}\uparrow\downarrow}m_{\uparrow{0}\uparrow}-m_{0\uparrow\uparrow}m_{\uparrow{0}\downarrow}).
\end{align}

Just as in the unconstrained case a nonzero value of any of these two invariants implies that the state is Bell-nonlocal for all bipartitions. This property remains true also for higher spin and larger numbers of particles, i.e., if a SLOCC invariant is nonzero there the state is Bell-nonlocal across all bipartitions into a single spatial mode and the rest of the modes, as shown in Ref. \cite{johansson15}.

\subsubsection{Strongly attractive interaction}
\label{att}
For completeness we consider the case of spin-$\frac{1}{2}$ fermions with strong attractive interaction as well. The entanglement properties of such a system depends on whether the number of fermions is odd or even.

The case of even fermion number serves as an example where a nonzero value of a SLOCC invariant does not imply that the state is Bell-nonlocal. 
If the particle interaction is strongly attractive and the number of spin-$\frac{1}{2}$ fermions is even, the local Hilbert space of a spatial mode is effectively reduced to the span of $|0\rangle$ and $|\diamondsuit\rangle$. Due to the particle number superselection rule the local action of the group of SLOCC is described by $2\times 2$ diagonal determinant one matrices. 

For two parties sharing two fermions the SLOCC invariants are generated by the monomial 

\begin{align}\label{i5}
I=m_{0\diamondsuit}m_{\diamondsuit{0}}.
\end{align}
In this simple case it is impossible to observe Bell-nonlocal correlations since the Hilbert space is spanned by the two vectors $|0\diamondsuit\rangle$ and $|\diamondsuit{0}\rangle$ which have different local particle numbers. Thus, in contrast to the case of strongly repulsive interaction a SLOCC invariant that takes a nonzero value does not imply that the state is Bell-nonlocal.

Bell-nonlocality can only be observed if a larger system is considered and each partition contains at least two spatial modes.
For a system of four fermions and four parties all invariants are generated by the three degree 2 invariants, $m_{0\diamondsuit\diamondsuit 0}m_{\diamondsuit{0}0\diamondsuit},m_{0\diamondsuit 0\diamondsuit }m_{\diamondsuit{0}\diamondsuit 0},$ and $m_{00\diamondsuit\diamondsuit}m_{\diamondsuit\diamondsuit{0}{0}}$. The states in a system of this type are Bell local for the quadru-partition A|B|C|D, as well as any tripartition, e.g., AB|C|D. For bipartitions with two parties in each partition Bell-nonlocal states can be constructed. Two parties in each partition are needed so that the Hilbert space of each part has a two-dimensional two-fermion subspace. Then the entanglement and Bell-nonlocality can be analysed analogously to the case of two localized spin-$\frac{1}{2}$ particles with the spatial position of a fermion pair replacing the spin degree of freedom. 
A given state may be Bell-nonlocal for more than one bipartition.
For example  $1/\sqrt{2}(|0\diamondsuit\diamondsuit{0}\rangle+|\diamondsuit{0}0\diamondsuit\rangle)$ is Bell-nonlocal for the partition AB|CD as well as the partition AC|BD.

Although a nonzero SLOCC invariant in general is no guarantee for Bell-nonlocality across a bipartition, special invariants with this property can be constructed. These are

\begin{align}\label{i6}
I_{AB|CD}=m_{0\diamondsuit\diamondsuit 0}m_{\diamondsuit{0}0\diamondsuit}-m_{0\diamondsuit0\diamondsuit}m_{\diamondsuit{0}\diamondsuit{0}},\nonumber\\
I_{AD|BC}=m_{00\diamondsuit\diamondsuit}m_{\diamondsuit\diamondsuit{0}{0}}-m_{0\diamondsuit0\diamondsuit}m_{\diamondsuit{0}\diamondsuit{0}},\nonumber\\
I_{AC|BD}=m_{0\diamondsuit\diamondsuit 0}m_{\diamondsuit{0}0\diamondsuit}-m_{00\diamondsuit\diamondsuit}m_{\diamondsuit\diamondsuit{0}{0}}.
\end{align}
Here $I_{AB|CD}$ is essentially the concurrence \cite{wootters} across AB|CD and a nonzero value guarantees the existence of Bell-nonlocal correlations. The invariants $I_{AD|BC}$ and $I_{AC|BD}$ have the analogous meaning for the two other bipartitions.

If the number of fermions is odd there will still remain one unpaired fermion. This implies that the local Hilbert space of any mode is spanned by $|\!\uparrow\rangle,|\!\downarrow\rangle,|0\rangle$, and $|\diamondsuit\rangle$, and that there exist states that are Bell-nonlocal across all bipartitions into one spatial mode and the rest, e.g. A|BCD. An example of a three mode state that is Bell-nonlocal across all bipartitions is 

\begin{eqnarray}
\frac{1}{\sqrt{6}}(|\!\uparrow\!\diamondsuit 0\rangle+|\!\downarrow\!0\diamondsuit \rangle+|0\!\uparrow\!\diamondsuit\rangle+|\diamondsuit\!\downarrow\!0\rangle+|\diamondsuit 0{\uparrow}\rangle+|0\diamondsuit{\downarrow}\rangle).\nonumber\\
\end{eqnarray}

However, maximally entangled states do not exist and therefore there are no SLOCC invariants. It can be shown that it is impossible to choose the state vector coefficients so that all reduced density matrices are maximally mixed, and that it is impossible to form polynomials that are invariant under the action of $G_{8,15}$.

\subsection{Partial localization}

Some physical systems can be modelled as a combination of localized and mobile particles. There exist materials where interaction between the localized and mobile particles is important for physical properties. See e.g. Ref. \cite{lyra} for a theoretical study and Refs. \cite{mukuda, ishiwata, ishiwata2,carter, hiroi} for examples of layered transition metal oxides with this kind of interaction.

In a system with such partial localization, the local Hilbert space of a spatial mode where a number of particles are localized is spanned by a smaller number of vectors than in the unconstrained case. Therefore, the local action of the group of SLOCC has a smaller number of generators. Just as in the unconstrained case, a nonzero value of a SLOCC invariant implies that the state is Bell-nonlocal over a bipartition into one spatial mode and the other spatial modes, as long as the dimension of the local Hilbert space of the single mode is greater than one. This can be seen by a small modification of the proof of Theorem \ref{t1g}. For clarity we give it as a separate corollary. 

\begin{corollary}\label{t1g2}
Let $|\psi\rangle$ be a state of $m$ spin-$\frac{p}{2}$ fermions in $n$ modes where some of the spatial modes contain a fixed number of fermions. Assume that no mode with a fixed number of fermions contain 0 or $p+1$ fermions. Then, if $|\psi\rangle$ is Bell-local over a bipartition of the system into a single mode and $n-1$ modes, there exist no entanglement monotones that take a nonzero value.
This implies that $|\psi\rangle$ cannot be converted to a maximally entangled state by SLOCC. Moreover, local special unitary operations can induce an arbitrary global phase shift of the state vector $|\psi\rangle$. Thus, if an entanglement monotone takes a non-zero value the state is Bell-nonlocal over every bi-partition of this type.
\end{corollary}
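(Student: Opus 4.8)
The plan is to treat the corollary as a dimensional refinement of Theorem~\ref{t1g} rather than a genuinely new argument. The proof of Theorem~\ref{t1g} rested on two facts about the distinguished mode singled out by the bipartition: first, that within any fixed particle-number sector of that mode the party holding the other $n-1$ modes is effectively unconstrained by the super-selection rule, so that a projection onto such a sector exhibits Bell-nonlocal correlations precisely when it is entangled; and second, that a determinant-one diagonal matrix acting on that mode can uniformly rescale the Bell-local normal form toward zero norm. I would show that partial localization leaves both facts intact, the only change being that some local Hilbert spaces are replaced by proper subspaces. The hypothesis that no fixed mode contains $0$ or $p+1$ fermions is exactly what keeps the local Hilbert space of any distinguished mode at dimension greater than one, which is all that the rescaling step requires.

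I would organise the argument by cases on the distinguished mode. If it is a mobile mode its local Hilbert space is the full $2^{p+1}$-dimensional space, and the proof of Theorem~\ref{t1g} carries over with only notational changes: the constraints on the remaining modes merely restrict which states $|\theta_{i_k}\rangle$ and $|\phi\rangle$ can appear, but they do not affect either the per-sector entanglement criterion or the construction of the scaling matrix $g(\alpha)$. If instead the distinguished mode is fixed at $k$ fermions with $1\le k\le p$, then global particle conservation pins the complementary $n-1$ modes to the sector of $m-k$ fermions, so both sides of the cut lie in a single super-selection sector and the rule imposes no obstruction there. The bipartition then reduces to ordinary entanglement of a $\binom{p+1}{k}$-dimensional system against its complement, a pure state is Bell-local if and only if it is a product, and up to local unitaries the Bell-local case takes the form $|1_k\rangle|\varphi\rangle$.

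In both cases the distinguished mode carries a local Hilbert space of dimension at least two, so its admissible operations contain a diagonal determinant-one matrix that scales the amplitude of the occupied single-mode vector by $e^{\beta}$ and compensates on the remaining basis vectors --- the matrix $g(\alpha)$ in the mobile case and a diagonal $\mathrm{SL}(\binom{p+1}{k})$ matrix in the fixed case. Sending $Re(\beta)\to-\infty$ drives the norm of the Bell-local state to zero, whence the Kempf--Ness theorem \cite{kempfness} and Theorem~1 of Ref.~\cite{verstraete2003} give that no maximally entangled state lies in the SLOCC orbit and that every homogeneous invariant vanishes, so no entanglement monotone can be nonzero; setting $Re(\beta)=0$ makes the operation special unitary and yields the stated global phase. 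The step I would treat most carefully is the fixed-mode case: one must confirm that fixing the occupation on both sides genuinely eliminates the super-selection constraint across that single cut, so that Gisin's equivalence of entanglement and Bell-nonlocality (Ref.~\cite{gisin}) applies, and that the residual local dimension $\binom{p+1}{k}$ is at least two, which is precisely what excluding fixed modes with $0$ or $p+1$ fermions secures.
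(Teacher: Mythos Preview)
Your proposal is correct and follows essentially the same route as the paper: split into the mobile-mode case (where the proof of Theorem~\ref{t1g} applies verbatim) and the fixed-mode case (where the state collapses to a single sector, Bell-locality forces a product $|1_k\rangle|\varphi\rangle$, and a diagonal $\mathrm{SL}\!\bigl(\binom{p+1}{k}\bigr)$ matrix drives the norm to zero). Your identification of the role of the hypothesis --- that excluding $0$ and $p+1$ keeps the local dimension $\binom{p+1}{k}\ge 2$ so the scaling matrix exists --- matches the paper's reasoning exactly.
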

\begin{proof}

For a partitioning into a single mode and $n-1$ modes in the case where the number of particles in the single mode is not fixed, the proof is identical to that of Theorem \ref{t1g}. Consider therefore a mode that contains a fixed nonzero number $0<k<p+1$ of particles. For a partitioning into this single mode and the other $n-1$ modes, a general state $|\psi\rangle$ can be written as

\begin{eqnarray}|\psi\rangle=\sum_{i_{k}=1}^{\binom{p+1}{k}}r_{i_{k}}|i_{k}\rangle|\theta_{i_{k}}\rangle,\nonumber\end{eqnarray} 
where the $|i_k\rangle$ is a basis of the $k$-fermion local Hilbert space of the single mode. The $|\theta_{i_{k}}\rangle$ are states of $m-k$ fermions in $n-1$ modes and thus all states in the subspace spanned by the $|\theta_{i_{k}}\rangle$ contains the same number of particles. Therefore, the operations by a party with access to the $n-1$ partition are not restricted by the super-selection rule. 

Nonlocal correlations can be observed if and only if the state is entangled. 
More precisely, the state is Bell-nonlocal if it is true that $|\theta_{i_{k}}\rangle\neq|\theta_{j_{k}}\rangle$ for some ${i_{k}}$ and ${j_{k}}$ and $r_{i_{k}}r_{j_{k}}\neq 0$.
If the state is Bell-local it thus follows that $r_{i_{k}}r_{j_{k}}=0$ or $|\theta_{i_{k}}\rangle=|\theta_{j_{k}}\rangle$ for each pair of ${i_{k}},{j_{k}}$. Such a state is, up to local unitary operations, of the form $|1_{k}\rangle|\varphi\rangle$, where ${1_k}$ is a given state of $k$ particles in the mode, and is separable.

Next consider the diagonal $\binom{p+1}{k}\times{\binom{p+1}{k}}$ determinant one matrix

\begin{eqnarray}{g}_{\alpha}=\left( \begin{array}{ccccc}
\!\! e^{-\alpha} & \! 0 & 0 & \cdots & 0 \\
\!\! 0 & e^{\alpha} & 0 & \cdots & 0 \\
\!\! 0 & 0 &  1 &  & \vdots  \\
\!\! \vdots & \vdots & & \ddots &  0 \\
\!\! 0 &  0   & \cdots & 0 & 1 \\
\end{array} \!\right),\nonumber
\end{eqnarray}
where $\alpha\in\mathbb{C}$. Here the first column corresponds to the action on the local state $|1_k\rangle$.

For a state of the form  $|1_{k}\rangle|\varphi\rangle$ the SLOCC operation $g(\alpha)\times{I}\dots\times{I}$ scales the state vector by a factor of $|e^{-\alpha}|$ and changes the phase by $\arg(e^{-\alpha})$. By making $Re(\alpha)$ sufficiently large we can bring the norm of the state arbitrarily close to zero. This implies that the set of states which are inter-convertible with $|1_{k}\rangle|\varphi\rangle$ does not contain any maximally entangled states \cite{kempfness,verstraete2003} and that no polynomial SLOCC invariant takes a nonzero value. This follows since no homogeneous polynomial is invariant under a scaling of the state vector.
Thus, if there exist an entanglement monotone which takes a non-zero value for a state, this state cannot have the form $|1_{k}\rangle|\varphi\rangle$ for any bipartition into a single mode and $n-1$ modes. Thus, the state must be Bell-nonlocal over every such bipartition.

The form of the state $|1_{k}\rangle|\varphi\rangle$ also implies that when $Re(\alpha)=0$ the SLOCC operation is special unitary and under this operation $g(\alpha)\times{I}\dots\times{I}$ the state accumulates a phase-factor $e^{-\alpha}$.
Thus, $e^{\alpha}g(\alpha)\times{I}\dots\times{I}$ is an element of the stabilizer group.

\end{proof}

Note that in the case where $p+1$ spin-$\frac{p}{2}$ fermions are fixed in the same spatial mode, which Corollary \ref{t1g2} does not consider, the state of the system is always a product state of the form $|\diamondsuit_{p+1}\rangle|\phi\rangle$ where $|\diamondsuit_{p+1}\rangle$ is the state of $p+1$ fermions. A state of this type is thus trivially Bell-local for the partitioning into the given spatial mode and the other modes. In this case the only determinant one operation on the local Hilbert space that is spanned by the single vector $|\diamondsuit_{p+1}\rangle$ is the identity operation and thus no SLOCC operation can scale the state vector. Therefore, the existence of SLOCC invariants that take a nonzero value cannot be excluded by the argument used in the corollary.

\subsubsection{One localized and two delocalized spin-$\frac{1}{2}$ fermions and three parties}\label{partloc}

The simplest case that allows for partial localization is three spatial modes. We consider such a system containing three spin-$\frac{1}{2}$ fermions. 
If one particle is constrained to be localized in the spatial mode of A, the local Hilbert space of A is spanned by $|\!\uparrow\rangle,|\!\downarrow\rangle$.
Therefore local action of the group of SLOCC on the spatial mode of A is generated by only $\lambda_1,\lambda_2,$ and $\lambda_3$. The group of SLOCC can thus be described as the product of $G_{1,2,3}$ and the subgroup of $G_{8,15}$ that has trivial local action on the spatial mode of A. Therefore, the method to find the algebra of invariant polynomials of $G_{1,2,3}$ by Cayley's Omega process and then select the subalgebra invariant under the full group of SLOCC can be used similarly to the unconstrained case.

The lowest possible degree of a SLOCC invariant is four as in the unconstrained case.
For this degree, there is a unique SLOCC invariant 

\begin{align}\label{kry}
I_{A}^{(1)}=&2m_{\uparrow 0\diamondsuit}m_{\uparrow\diamondsuit 0}(m_{\downarrow\downarrow\uparrow}m_{\downarrow\uparrow\downarrow}-m_{\downarrow\downarrow\downarrow}m_{\downarrow\uparrow\uparrow})\nonumber\\
&+2m_{\downarrow 0\diamondsuit}m_{\downarrow\diamondsuit 0}(m_{\uparrow\downarrow\uparrow}m_{\uparrow\uparrow\downarrow}-m_{\uparrow\downarrow\downarrow}m_{\uparrow\uparrow\uparrow})\nonumber\\
&+m_{\uparrow 0\diamondsuit}m_{\downarrow\diamondsuit 0}(m_{\downarrow\downarrow\downarrow}m_{\uparrow\uparrow\uparrow}-m_{\downarrow\uparrow\downarrow}m_{\uparrow\downarrow\uparrow})\nonumber\\
&+m_{\uparrow 0\diamondsuit}m_{\downarrow\diamondsuit 0}(m_{\downarrow\uparrow\uparrow}m_{\uparrow\downarrow\downarrow}-m_{\downarrow\downarrow\uparrow}m_{\uparrow\uparrow\downarrow})\nonumber\\
&+m_{\downarrow 0\diamondsuit}m_{\uparrow\diamondsuit 0}(m_{\downarrow\downarrow\downarrow}m_{\uparrow\uparrow\uparrow}-m_{\downarrow\uparrow\downarrow}m_{\uparrow\downarrow\uparrow})\nonumber\\
&+m_{\downarrow 0\diamondsuit}m_{\uparrow\diamondsuit 0}(m_{\downarrow\uparrow\uparrow}m_{\uparrow\downarrow\downarrow}-m_{\downarrow\downarrow\uparrow}m_{\uparrow\uparrow\downarrow}).
\end{align}
For degree eight there is a single generator invariant. This can be chosen as a product of the three-tangle \cite{coffman} and $(m_{\uparrow 0\diamondsuit}m_{\downarrow\diamondsuit 0}-m_{\downarrow 0\diamondsuit}m_{\uparrow\diamondsuit 0})^2$, i.e., 

\begin{align}
I_{A}^{(2)}\!\!=&(m_{\uparrow 0\diamondsuit}m_{\downarrow\diamondsuit 0}-m_{\downarrow 0\diamondsuit}m_{\uparrow\diamondsuit 0})^2\nonumber\\
&\times\!(m_{\downarrow\downarrow\downarrow}^2m_{\uparrow\uparrow\uparrow}^2
+m_{\uparrow\downarrow\downarrow}^2m_{\downarrow\uparrow\uparrow}^2+
m_{\downarrow\uparrow\downarrow}^2m_{\uparrow\downarrow\uparrow}^2
+m_{\downarrow\downarrow\uparrow}^2m_{\uparrow\uparrow\downarrow}^2\nonumber\\
&-2m_{\downarrow\downarrow\downarrow}m_{\downarrow\downarrow\uparrow}
m_{\uparrow\uparrow\downarrow}m_{\uparrow\uparrow\uparrow}
-2m_{\downarrow\downarrow\downarrow}m_{\downarrow\uparrow\downarrow}
m_{\uparrow\downarrow\uparrow}m_{\uparrow\uparrow\uparrow}\nonumber\\
&-2m_{\downarrow\downarrow\downarrow}m_{\downarrow\uparrow\uparrow}
m_{\uparrow\downarrow\downarrow}m_{\uparrow\uparrow\uparrow}
-2m_{\downarrow\downarrow\uparrow}m_{\downarrow\uparrow\downarrow}
m_{\uparrow\downarrow\uparrow}m_{\uparrow\uparrow\downarrow}\nonumber\\
&-2m_{\downarrow\downarrow\uparrow}m_{\uparrow\uparrow\downarrow}
m_{\downarrow\uparrow\uparrow}m_{\uparrow\downarrow\downarrow}
-2m_{\downarrow\uparrow\downarrow}m_{\downarrow\uparrow\uparrow}
m_{\uparrow\downarrow\uparrow}m_{\uparrow\downarrow\downarrow}\nonumber\\
&+4m_{\downarrow\downarrow\downarrow}m_{\downarrow\uparrow\uparrow}
m_{\uparrow\downarrow\uparrow}m_{\uparrow\uparrow\downarrow}
+4m_{\uparrow\uparrow\uparrow}m_{\uparrow\downarrow\downarrow}
m_{\downarrow\uparrow\downarrow}m_{\downarrow\downarrow\uparrow}).\nonumber\\
\end{align}
The construction of these invariants is given in the Appendix.
For the cases with a fermion localized with party B or C invariants can be constructed in a completely analogous way.

No systematic search for this type of invariants was undertaken for higher degrees due to the increasing complexity of the search with increasing polynomial degree.
However, the fact that at least two independent invariant exist implies that there is an uncountable number of SLOCC inter-convertibility classes.

Finally, we can consider the case where the interaction between the two delocalized particles is strongly attractive. Then the local Hilbert spaces of B and C are both spanned by only $|0\rangle$ and $|\diamondsuit\rangle$. A SLOCC invariant for this case is

\begin{eqnarray}
I_{A}^{L}=(m_{\uparrow 0\diamondsuit}m_{\downarrow\diamondsuit 0}-m_{\downarrow 0\diamondsuit}m_{\uparrow\diamondsuit 0}).
\end{eqnarray}
If $I_{A}^{L}$ is nonzero the state features Bell-nonlocal correlations over the bipartition A|BC but not any other partitioning. Similar measures for systems with a particle localized with B or C can be constructed by permuting the spatial modes.

\section{Example: Groundstate entanglement at a transition between different regimes in a cyclic three site Ising-Hubbard chain}
\label{sec5}
In this section we consider a model system with three spin-$\frac{1}{2}$ fermions where the groundstate of the Hamiltonian undergoes a transition between two qualitatively different regimes.
Close to the transition the groundstate has a high degree of entanglement of the type for which only the invariants $I^{(1)}$ and $I^{(2)}$ are nonzero.

The three spin-$\frac{1}{2}$ fermions are confined to a three site cyclic chain. The Hamiltonian of the system is chosen as a hybrid of the Quantum Ising Hamiltonian \cite{sachdev} and the Hubbard Hamiltonian \cite{essler}. The potential energy terms are an on-site interaction between the fermions and a nearest neighbour Ising spin-interaction in the z-direction. Note that since the chain has a cyclic boundary condition all the sites are nearest neighbours.
In addition it features an external magnetic field aligned with the z-axis, i.e., the axis of the Ising interaction. The kinetic energy terms of the Hamiltonian correspond to spin reversal along the z-axis and inter-site hopping. The Hamiltonian $H$ is thus given by

 \begin{align}\label{ham}
H=&-J\sum_{j}\sigma^z_j\sigma^z_{j+1}-B\sum_j\sigma^z_j-K\sum_jn_{ j,s}n_{ j,-s}\nonumber\\&
+f\sum_j\sigma^x_j+\sum_{j}\sum_{s\in\{\downarrow,\uparrow\}}p_s(c^{\dagger}_{j,s}c_{j+1,s}+c^{\dagger}_{j+1,s}c_{j,s}),\nonumber\\
\end{align} 
where $c_{j,s}$, $c_{j,s}^{\dagger}$ and $n_{j,s}$ are the annihilation, creation, and number operators, respectively, of a fermion with spin $s\in\{\uparrow,\downarrow\}$ on site $j$. Here the spin $-s$ is understood to be the opposite spin of $s$. 
The parameter $J$ determines the type and strength of the Ising interaction. A positive $J$ corresponds to ferromagnetic interaction and a negative $J$ to anti-ferromagnetic interaction. The value of $K$ determines the type and strength of on-site interaction. A positive value of $K$ corresponds to attractive interaction and a negative value to repulsive interaction. The value of $B$ describes the external magnetic field.
The parameter $f$ determines the magnitude of the kinetic energy term describing the tendency for spins to reverse. The parameters $p_{\uparrow}$ and $p_{\downarrow}$ describe the inter-site hopping, i.e., the tendency for particles to move between adjacent sites. If $p_{\uparrow}\neq p_{\downarrow}$ the hopping is spin dependent. 

Here we choose the hopping parameters as $p=p_{\downarrow}=-p_{\uparrow}$. This choice of spin antisymmetry of the hopping terms is made to ensure that the ground level of the Hamiltonian is non-degenerate and a unique groundstate can be identified. 
Furthermore, we consider the case with a ferromagnetic Ising interaction and an attractive on-site fermion interaction.

If the on-site interaction dominates over the Ising interaction and the external field, and if the kinetic energy terms are much smaller than the potential energy terms, the groundstate is, to a good approximation, a superposition of different configurations where two fermions occupy the same site. We call this the paired regime.

If the Ising interaction dominates over the on-site interaction, and the kinetic energy terms are small, the groundstate of the Hamiltonian is close to a state where one fermion occupies each site. We call this the Ising regime.

The groundstate in the paired regime for $B=0$ and for $B\neq 0$ is not significantly different if $B\ll f$. This is because no potential barrier prevents the spins from reversing.
The groundstate in the Ising regime on the other hand is greatly affected by even a small external field. If $B=0$ the groundstate has equal amplitudes of $|\!\downarrow\downarrow\downarrow\rangle$
and $|\!\uparrow\uparrow\uparrow\rangle$.
In the presence of a small external field $B\gtrsim 1\cdot10^{-7}$ the groundstate is close to a product state where all the spins are aligned with the external field.
This is because the energy barrier corresponding to reversing a spin is large compared to the kinetic energy. Even though the energy difference $6B$ between the two product states  $|\!\downarrow\downarrow\downarrow\rangle$
and $|\!\uparrow\uparrow\uparrow\rangle$ is much smaller than $f$ by assumption, the energy needed to reverse a single spin to reach an intermediate state, e.g. $|\!\downarrow\uparrow\downarrow\rangle$, is approximately $4J$, which is much larger than $f$.

If the values of $J$ and $K$ are appropriately chosen, an increase of the external magnetic field $B$ can bring the groundstate of the system from the paired regime to the Ising regime.

To see this effect in the particular example Hamiltonian considered here the potential energy terms are chosen as $J=1$ and $K=2.99507$ and the kinetic terms are $f=5\cdot10^{-3}$ and $p=5\cdot10^{-6}$. The parameters $J$ and $K$ have been chosen to allow a very small external field to cause a transition between the Ising regime and the paired regime. This requires $K/J\approx 3$ so that the Ising interaction gives almost the same energy as the on-site interaction. The particular value of $K/J=2.99507$  has been carefully chosen such that the external field creates close to the maximal amount of groundstate entanglement possible with the given interaction terms. 
For these values of $J,K,f$ and $p$ the transition between the paired regime and the Ising regime is relatively abrupt around a value of 
$B=1.71\cdot 10^{-5}$. 

The energy levels and energy eigenstates of the Hamiltonian have been calculated numerically over a range of values of $B$ by the Mathematica 10.0 Eigensystem function \cite{mathematica} which uses the ARPACK software that is based on the Implicitly Restarted Arnoldi Method (IRAM). The numerical result indicates that for $B=0$ the lowest energy level is two-fold degenerate. But for $B>0$ the lowest level in non-degenerate and the groundstate is unique. Moreover, the Hamiltonian has an avoided levelcrossing between the groundstate and the first excited state close to $B=1.71\cdot 10^{-5}$. See Fig. \ref{fig:levelcross} for a plot of the energy of the groundstate and the four lowest excited energy levels for $B$ in the interval $B=0$ to $B=2\cdot 10^{-5}$. The energy is given relative to the groundstate energy at $B=0$.
The energy of the (initially) fourth excited energy level decreases as $B$ increases and
has an avoided levelcrossing with the groundstate at $B\approx 1.71 \cdot 10^{-5}$
after
crossing the third, second, and first excited levels.
This  avoided levelcrossing  corresponds to the transition between the paired regime and the Ising regime.

\begin{figure}
    \centering
        \includegraphics[width=0.46\textwidth]{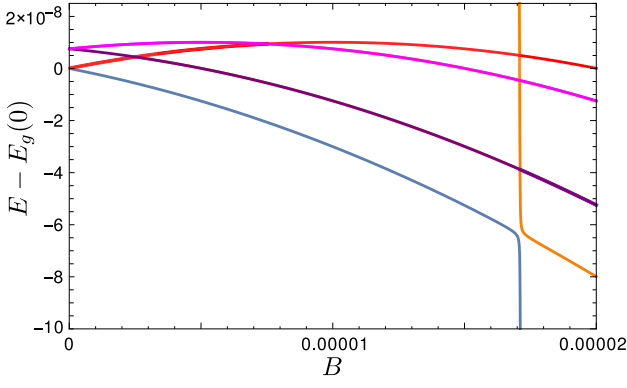}

    \caption{The energy $E$ of the groundstate (blue) and the four lowest excited levels (purple, magenta, red, orange), as functions of the external field $B$ in the interval $0\leq B\leq 2\cdot 10^{-5}$. 
    The energy is given relative to $E_g(0)$, the groundstate energy at $B=0$. 
The groundstate has an avoided levelcrossing with a non-degenerate excited level (orange) close to $B\approx 1.71\cdot 10^{-5}$.       }\label{fig:levelcross}
\end{figure}

For  $0<B\ll1.7\cdot 10^{-5}$ the groundstate of the system is very close to the state $1/\sqrt{12}|\psi_{P}\rangle$, where

\begin{align}|\psi_{P}\rangle\equiv&
-(|\diamondsuit  \!\uparrow\! 0\rangle+|\!\uparrow\! 0 \diamondsuit  \rangle+| 0\diamondsuit \!\uparrow\rangle)\nonumber\\
&+(| \!\uparrow\!\diamondsuit 0 \rangle+\!|0\!\uparrow \!\diamondsuit \rangle+|\diamondsuit 0 \!\uparrow  \rangle)
\nonumber\\
& -(|\diamondsuit 0 \!\downarrow \rangle+|\! \downarrow \!\diamondsuit 0 \rangle+|0\!\downarrow\!  \diamondsuit \rangle)\nonumber\\
&+(| \diamondsuit \! \downarrow \!  0\rangle+| 0\diamondsuit\! \downarrow \rangle+|\!\downarrow\!  0 \diamondsuit\rangle). 
\end{align}

Since the external field $B$ is very small compared to the kinetic term $f$ for the entire considered interval there is almost no directional bias of the spins in the groundstate as long as it is in the paired regime. Note that the state $1/\sqrt{12}|\psi_{P}\rangle$ is Bell-local for every partition of the system.

At $B=1.65\cdot 10^{-5}$ the groundstate has not deviated much from $1/\sqrt{12}|\psi_{P}\rangle$ and is

\begin{align}&2.20 \cdot 10^{-4}|\!\uparrow\uparrow\uparrow \rangle-1.20 \cdot 10^{-2}|\!\downarrow\downarrow\downarrow \rangle+0.288|\psi_{P}\rangle\nonumber\\
&-1.76\cdot 10^{-6}(|\!\downarrow\uparrow\uparrow \rangle+|\uparrow\downarrow\uparrow \rangle+|\!\uparrow\uparrow\downarrow \rangle)\nonumber\\
&+1.64\cdot 10^{-5}(|\!\downarrow\downarrow\uparrow \rangle+|\!\uparrow\downarrow\downarrow \rangle+|\!\downarrow\uparrow\downarrow \rangle).\end{align}
As $B$ increases beyond this value the amplitude of $|\!\downarrow\downarrow\downarrow\rangle$ begins to increase exponentially and reaches $\approx0.5$ at $B= 1.17099\cdot 10^{-5}$. At this point the groundstate is
\begin{align}&4.11 \cdot 10^{-4}|\!\uparrow\uparrow\uparrow \rangle-0.500|\!\downarrow\downarrow\downarrow\rangle +0.250|\psi_{P}\rangle\nonumber\\
&-3.33\cdot 10^{-6}(|\!\downarrow\uparrow\uparrow \rangle+|\!\uparrow\downarrow\uparrow \rangle+|\!\uparrow\uparrow\downarrow \rangle)\nonumber\\
&+6.27\cdot 10^{-4}(|\!\downarrow\downarrow\uparrow \rangle+|\!\uparrow\downarrow\downarrow \rangle+|\!\downarrow\uparrow\downarrow \rangle),\end{align}
which is close to $1/2|\!\downarrow\downarrow\downarrow \rangle+1/4|\psi_{P}\rangle$.
As $B$ increases beyond $1.171\cdot 10^{-5}$ the amplitude of $|\!\downarrow\downarrow\downarrow \rangle$ increases rapidly to almost 1 while the amplitude of $|\psi_{P}\rangle$ approaches zero.
At $B=1.175\cdot 10^{-5}$ the groundstate is already close to the product state $|\!\downarrow\downarrow\downarrow\rangle$,

\begin{align}&4.50\cdot 10^{-4}|\!\uparrow\uparrow\uparrow \rangle-0.999|\!\downarrow\downarrow\downarrow \rangle+5.36\cdot 10^{-3}|\psi_{P}\rangle\nonumber\\
&-3.71\cdot 10^{-6}(|\!\downarrow\uparrow\uparrow \rangle+|\!\uparrow\downarrow\uparrow \rangle+|\!\uparrow\uparrow\downarrow \rangle)\nonumber\\&+1.25\cdot 10^{-3}(|\!\downarrow\downarrow\uparrow \rangle+|\!\uparrow\downarrow\downarrow \rangle+|\!\downarrow\uparrow\downarrow \rangle).\end{align}

To quantify the entanglement of the groundstate as a function of the external field $B$, we consider the measures constructed from the generators of the polynomial $G_{{\mbox{\tiny\itshape SLOCC}}}$ invariants. 
The only two generators that take a non-negligible value for the groundstate are  $I^{(1)}$ and $I^{(2)}$. Due to the form of $|\psi_{P}\rangle$ we can see that $I^{(1)}\approx -I^{(2)}$ throughout the considered range of $B$. 
Therefore, we construct a measure as $4|I^{(1)}-I^{(2)}|^{1/2}$. The factor $4$ is a normalization chosen to give the measure the maximal value $1$.

For comparison we also consider the entanglement in the sector where one fermion is at each site. For this type of entanglement we use the measure $2|\tau|^{1/2}$ where $\tau$ is the three-tangle \cite{coffman}. The factor $2$ is chosen such that the maximal value of the measure is $1$.

In addition to these two measures based on polynomial invariants we also consider the subsystem entropy $\mathcal{E}$ of a site. Due to the form of the groundstate the subsystem entropy is the same for all three sites. The three different measures have a qualitatively different behaviour as the external field is increased from $B=0$ to $B=3\cdot 10^{-5}$.
 
The measure $4|I^{(1)}-I^{(2)}|^{1/2}$ is initially zero and then grows slowly until the growth becomes exponential close to $B=1.7\cdot 10^{-5}$ with a peak of $0.498$ at $B\approx 1.7099 \cdot 10^{-5}$. The value then rapidly goes to almost zero as B increases further. In Fig. \ref{fig:animals} the measure $4|I^{(1)}-I^{(2)}|^{1/2}$ is plotted for $B$ in the interval $B=0$ to $B=3\cdot 10^{-5}$ and also for the interval $B=1.65\cdot 10^{-5}$ to $B=1.75\cdot 10^{-5}$ around the transition.

\begin{figure}
    \centering
    \begin{subfigure}[b]{0.46\textwidth}
        \includegraphics[width=\textwidth]{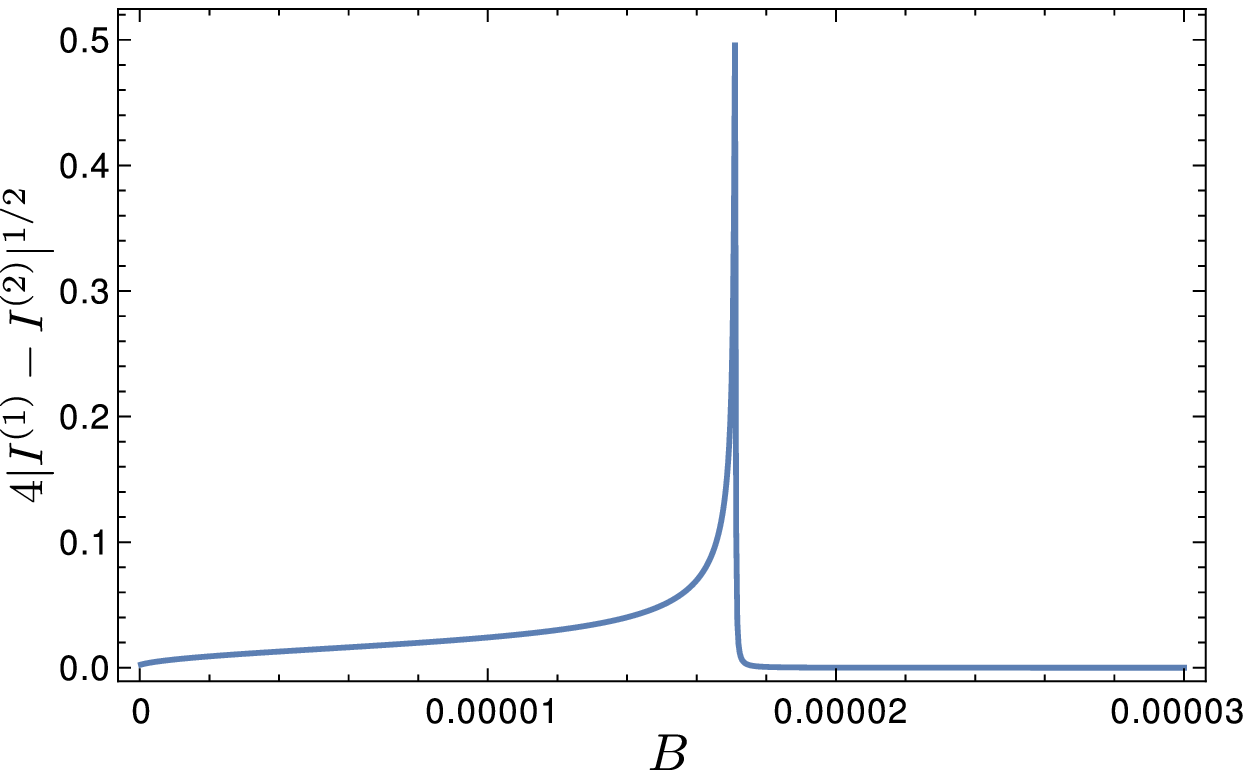}

    \end{subfigure}
    
    \begin{subfigure}[b]{0.46\textwidth}
        \includegraphics[width=\textwidth]{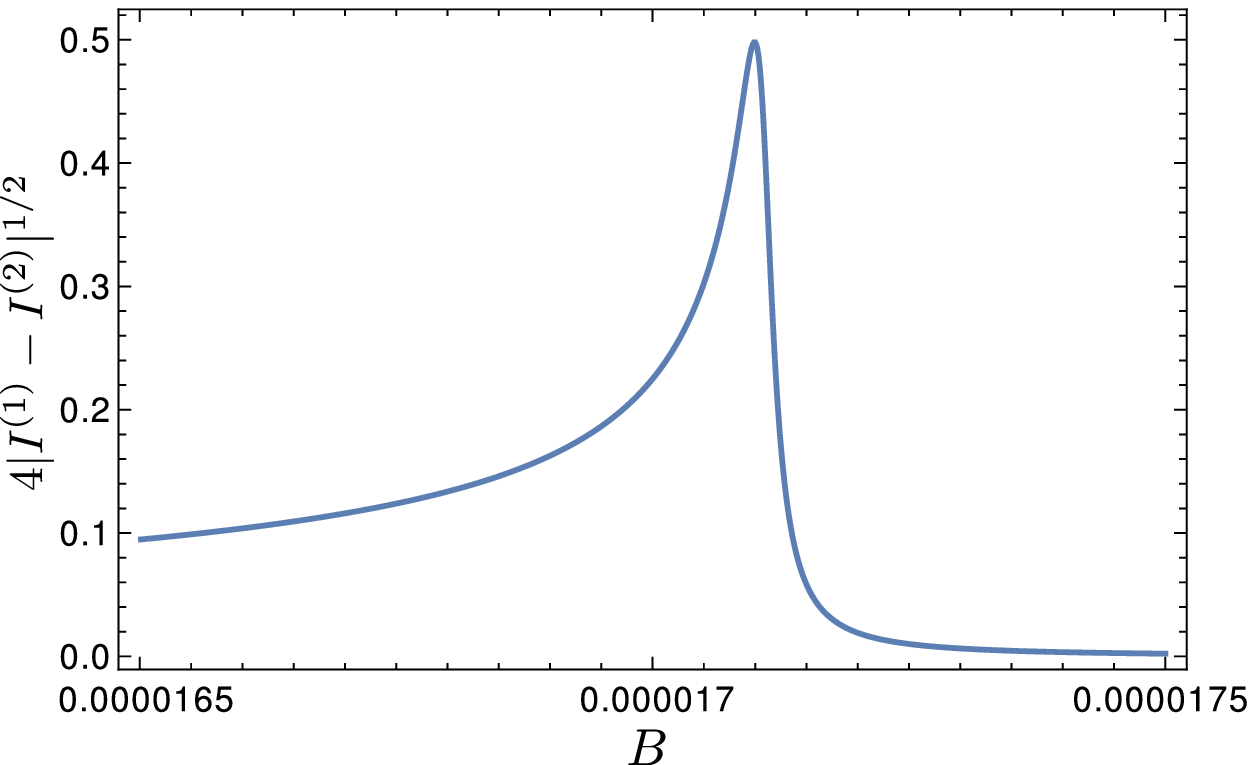}

    \end{subfigure}
    ~ 
    \caption{The measure $4|I^{(1)}-I^{(2)}|^{1/2}$ as a function of $B$ in the interval $B=0$ to $B=3\cdot 10^{-5}$ (upper) and the interval $B=1.65\cdot 10^{-5}$ to $B=1.75\cdot 10^{-5}$ (lower). The peak value 0.498 of $4|I^{(1)}-I^{(2)}|^{1/2}$ is at $B\approx 1.7099 \cdot 10^{-5}$. }\label{fig:animals}
\end{figure}

The measure $2|\tau|^{1/2}$ is close to zero at $B=0$ and stays close to zero until B approaches
$1.7\cdot 10^{-5}$ where it grows fast and peaks at a value $9.578\cdot 10^{-4}$ at $B\approx 1.7139 \cdot 10^{-5}$ . It then gradually decreases. In Fig. \ref{fig:animals2} the measure $2|\tau|^{1/2}$ is plotted for $B$ in the interval $B=0$ to $B=3\cdot 10^{-5}$ and also for the interval $B=1.65\cdot 10^{-5}$ to $B=1.75\cdot 10^{-5}$ around the transition. 

\begin{figure}
    \centering
    \begin{subfigure}[b]{0.46\textwidth}
        \includegraphics[width=\textwidth]{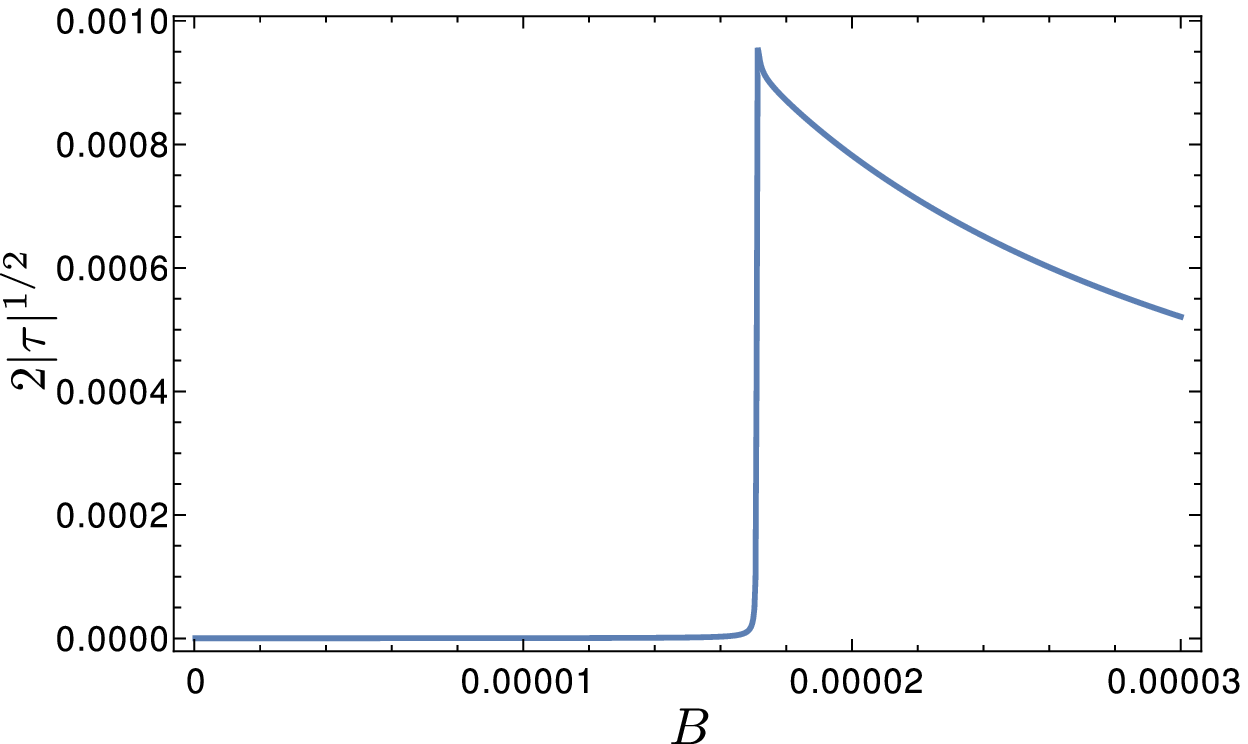}

    \end{subfigure}
    ~ 
    \begin{subfigure}[b]{0.46\textwidth}
        \includegraphics[width=\textwidth]{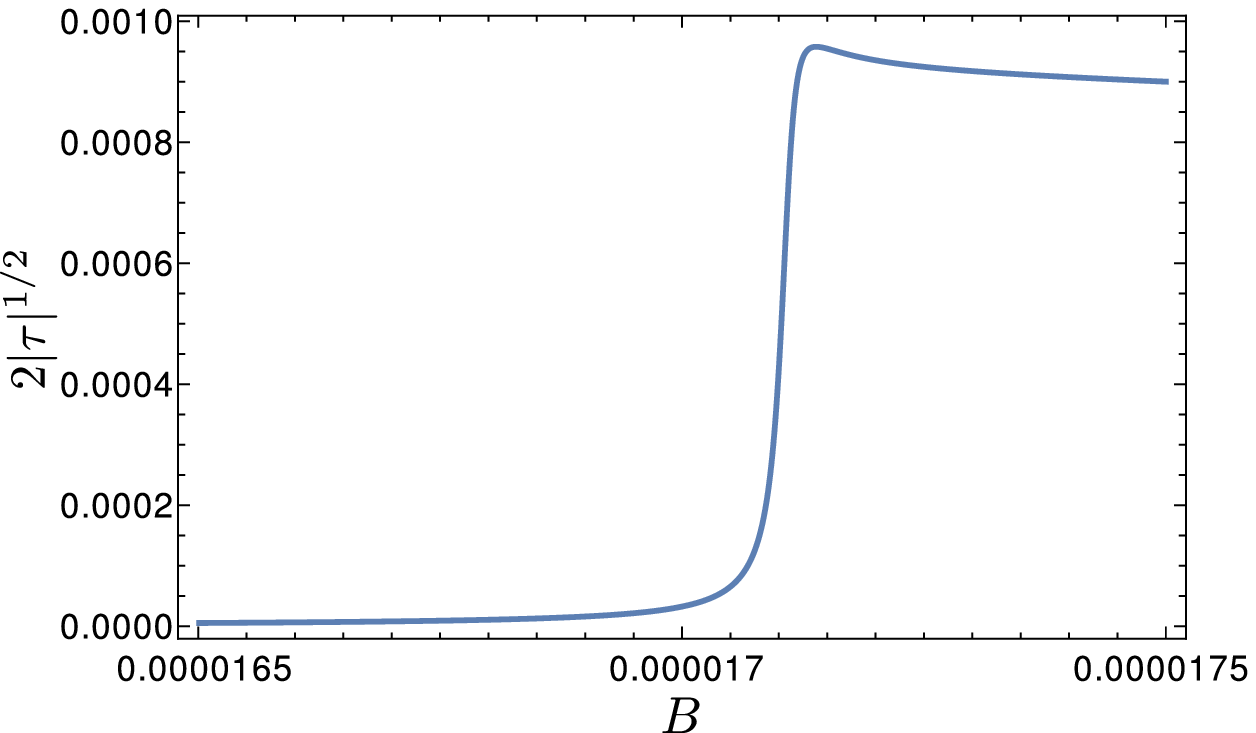}

    \end{subfigure}
    ~ 
    \caption{The measure $2|\tau|^{1/2}$ as a function of $B$ in the interval $B=0$ to $B=3\cdot 10^{-5}$ (upper) and the interval $B=1.65\cdot 10^{-5}$ to $B=1.75\cdot 10^{-5}$ (lower). The peak value $9.578\cdot 10^{-4}$ of $2|\tau|^{1/2}$ is at $B\approx 1.7139 \cdot 10^{-5}$. }\label{fig:animals2}
\end{figure}

The subsystem entropy has a value very close to $\ln(3)\approx 1.0986$ for the interval between $B=0$ and a value of $B$ close to $1.7\cdot 10^{-5}$. It then grows rapidly to a peak value of $\mathcal{E}=1.251$ at $B\approx 1.7097 \cdot 10^{-5}$ and afterwards rapidly goes to zero. In Fig. \ref{fig:animals3} the subsystem entropy is plotted for $B$ in the interval $B=0$ to $B=3\cdot 10^{-5}$ and also for the interval $B=1.65\cdot 10^{-5}$ to $B=1.75\cdot 10^{-5}$ around the transition.

\begin{figure}
    \centering
    \begin{subfigure}[b]{0.46\textwidth}
        \includegraphics[width=\textwidth]{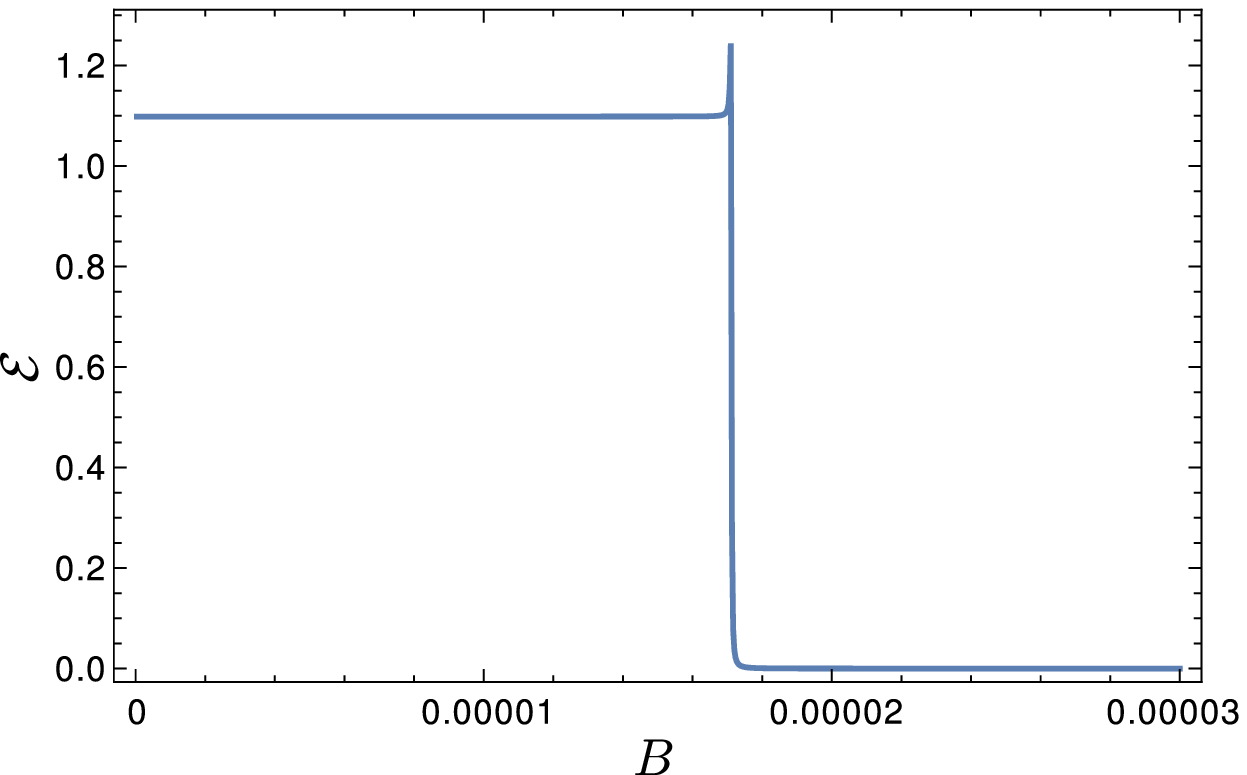}

    \end{subfigure}
    
    \begin{subfigure}[b]{0.46\textwidth}
        \includegraphics[width=\textwidth]{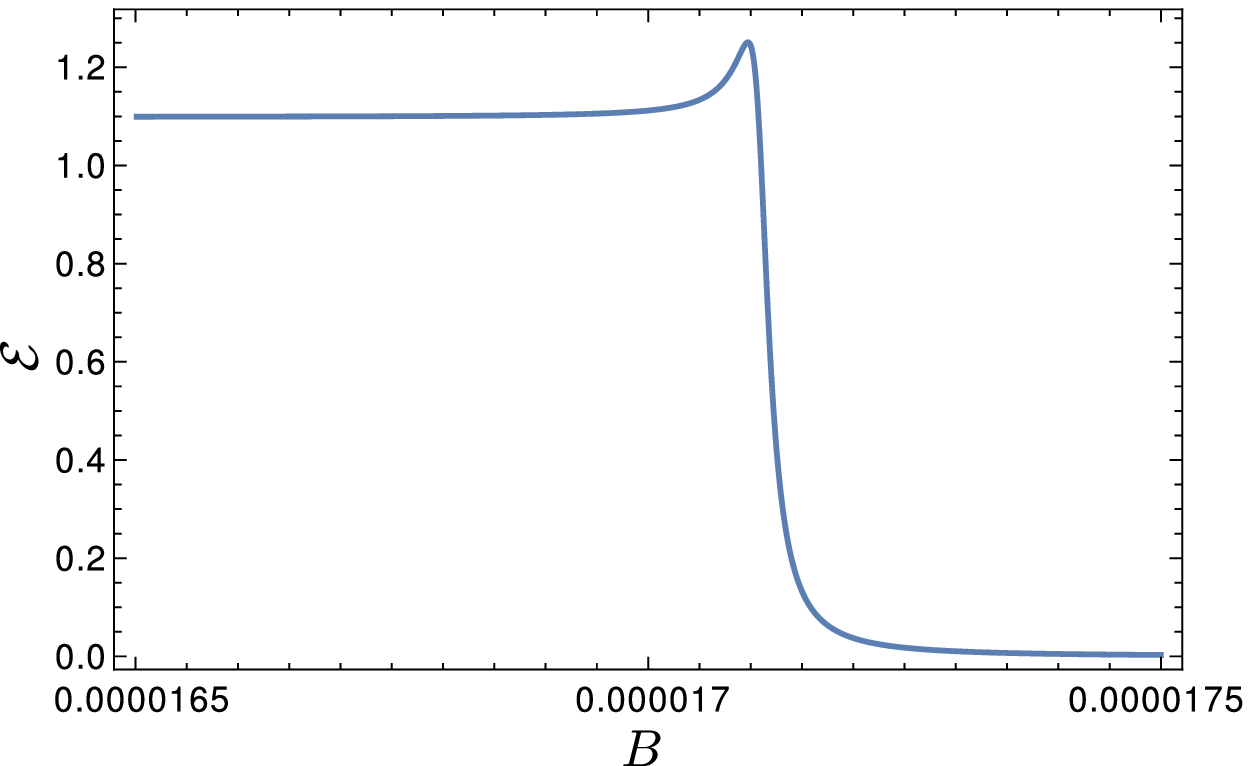}

    \end{subfigure}
   
    \caption{The subsystem entropy $\mathcal{E}$ as a function of $B$ in the interval $B=0$ to $B=3\cdot 10^{-5}$ (upper) and the interval $B=1.65\cdot 10^{-5}$ to $B=1.75\cdot 10^{-5}$ (lower). The peak value $1.251$ of the entropy is at $B\approx 1.7097 \cdot 10^{-5}$. }\label{fig:animals3}
\end{figure}

Any of the three measures can thus be used to detect the transition between the paired regime and the Ising regime. However, the measure $2|\tau|^{1/2}$ only reaches a small fraction of its maximal value since the entanglement in the sector where one fermion is at each site is very small.

The subsystem entropy reaches a value that indicates the presence of Bell-nonlocality, but the relative increase is small since this measure is sensitive also to the correlation of the groundstate in the paired regime. These correlations do not deviate much from Bell-local correlations since the groundstate in the paired regime, away from the transition, is close to indistinguishable from a separable state.

The measure $4|I^{(1)}-I^{(2)}|^{1/2}$ shows a large relative increase since it is only sensitive to correlations that are Bell-nonlocal. It also reaches a large value compared to its maximum since it was specifically designed to quantify this type of entanglement.
Note that for permutation symmetric or antisymmetric states maximal entanglement cannot be reached and the largest possible value of $4|I^{(1)}-I^{(2)}|^{1/2}$ is $2^{-1/4}\approx 0.841$. Furthermore given that the amplitude for $|\diamondsuit 0 \!\downarrow \rangle$ has the same absolute value as $|\diamondsuit 0 \!\uparrow \rangle$ and similarly for all permutations of the sites, the maximum value of $4|I^{(1)}-I^{(2)}|^{1/2}$ is $1/2$. Thus, the value of $4|I^{(1)}-I^{(2)}|^{1/2}$ at the transition is very close to the maximum possible given the properties of the Hamiltonian.

Without the big difference between the amplitudes of $|\!\downarrow\downarrow\downarrow\rangle$
and $|\!\uparrow\uparrow\uparrow\rangle$ in the groundstate at the transition the entanglement quantified by $4|I^{(1)}-I^{(2)}|^{1/2}$ would be very small. To see a high degree of this type of entanglement it is thus crucial that the kinetic energy $f$ is much smaller than the potential barrier between $|\!\downarrow\downarrow\downarrow\rangle$
and $|\!\uparrow\uparrow\uparrow\rangle$ in the Ising regime.

\section{Conclusions}

We have described a method for finding invariants under SLOCC in a system of indistinguishable spin-$\frac{1}{2}$ fermions delocalized over a number of spatial modes, with the constraint that the global number of particles is conserved. For two and three fermions such invariants and their associated measures were constructed. It was shown that if global particle conservation is the only constraint on the system a nonzero value of a SLOCC invariant implies that Bell-nonlocal correlations can be observed across any bipartition of the system into a single spatial mode and the rest of the modes. Furthermore it was shown that SLOCC invariants can only exist if the number of spin-$\frac{1}{2}$ fermions equal the number of spatial modes.
The result relating the existence of SLOCC invariants and Bell-nonlocality, as well as the condition on the number of fermions per spatial mode, were generalized to an arbitrary number of fermions of arbitrary spin.

For the two fermion case the properties of the constructed invariant under SLOCC and the associated measure of entanglement was compared to two other commonly used correaltion measures, the fermionic concurrence and the subsystem entropy. In particular their different relations to Bell-nonlocality was discussed.

The case with additional constraints, such as fixing the location one of the fermions or imposing a strong repulsive or attractive interaction was discussed. In this case the group of SLOCC depends on the extra constraints and therefore the set of invariants is different compared to the unconstrained case. The relation between Bell-nonlocality and SLOCC invariants which holds in the unconstrained case is not generally true when additional constraints are introduced. However, for the case of partial localization, i.e., when a subset of the fermions are in fixed spatial modes, as well as for the case of strong repulsive interaction, generalized relations exist. 

Finally, a model system with a hybrid Ising-Hubbard interaction Hamiltonian was considered to illustrate how the groundstate entanglement at a transition between a regime dominated by on-site interaction and a regime dominated by Ising interaction can be described by an entanglement measure constructed from a SLOCC invariant.
The value of the entanglement measure peaks at the transition and its behaviour was compared to the behaviour of a measure constructed from the three-tangle as well as the subsystem entropy.

A limitation of the constructed entanglement measures is that they work only for pure states. Measures valid also for mixed states would have a greater applicability as they could be used to describe entanglement in subsystems of larger more complex systems. Finding the convex roof extensions of the measures constructed here is thus a relevant open problem.

\subsection*{Acknowledgement}
The authors acknowledge discussions with Micha\l{} Oszmaniec, Antonio Ac\'\i n, Daniel Cavalcanti, Zolt\'an Zimbor\'as, Barbara Kraus, G\'eza T\'oth, and Manabendra Nath Bera.  
Support from the ERC CoG QITBOX, the Spanish MINECO (Project FOQUS FIS2013-46768-P, Severo Ochoa grant SEV-2015-0522), Fundacion Cellex, the Generalitat de Catalunya (SGR 875) and the John Templeton Foundation is acknowledged.
M.J. acknowledges support from  the Marie Curie COFUND action through the ICFOnest program.

\appendix
\section*{Appendix}
\label{app}
All invariants of the group $G_{1,2,3}$ can be constructed through Cayley's Omega Process. The Cayley Omega Process is designed to find the invariants of the special linear group and since $G_{1,2,3}$ for three parties is isomorphic to ${\mathrm{SL(2)}}^{\times 3}$ we can use it to find the $G_{1,2,3}$ invariants. 
For a detailed and general description of Cayleys Omega Process see e.g. Ref. \cite{olver} and see also Ref. \cite{ltt} for an application to invariants of unitary operations on a system of four distinguishable two level particles.

To use the Omega Process we must first represent the state of the system as a set of multilinear binary forms.
We arrange the amplitudes $m_{ijk}$ of the state vector for which $i,j,k\in\{\uparrow,\downarrow\}$ as a trilinear form $M(x_i,y_j,z_k)=\sum_{i,j,k\in\{\uparrow,\downarrow\}}m_{ijk}x_iy_jz_k$, and the remaining amplitudes are arranged in six linear forms

\begin{align}
m_{21}(x_i)&=\sum_{i\in\{\uparrow,\downarrow\}}m_{i\diamondsuit 0 }x_i,\nonumber\\
m_{31}(x_i)&=\sum_{i\in\{\uparrow,\downarrow\}}m_{i 0\diamondsuit}x_i,\nonumber\\
m_{12}(y_j)&=\sum_{j\in\{\uparrow,\downarrow\}}m_{\diamondsuit j 0}y_j,\nonumber\\
m_{32}(y_j)&=\sum_{j\in\{\uparrow,\downarrow\}}m_{0 j\diamondsuit}y_j,\nonumber\\
m_{23}(z_k)&=\sum_{k\in\{\uparrow,\downarrow\}}m_{0 \diamondsuit k}z_k,\nonumber\\
m_{13}(z_k)&=\sum_{k\in\{\uparrow,\downarrow\}}m_{\diamondsuit 0 k }z_k.
\end{align}

The invariants are obtained recursively from the collection $\{M,m_{12},m_{13},m_{21},m_{23},m_{31},m_{32}\}$ through iterating a type of operation called transvection defined by the following two steps. In the first step multiply two forms, e.g, $A(x'_i,y'_j,z'_k)$ and $B(x''_i,y''_j,z''_k)$.
Then, apply a partial differential operator $\Omega_w$, where $w\in\{x,y,z\}$ to $A(x'_i,y'_j,z'_k)B(x''_i,y''_j,z''_k)$. The operator $\Omega_x$ is defined by

\begin{eqnarray}\Omega_{x}=\frac{\partial^2}{\partial{x'_{\uparrow}}\partial{x''_{\downarrow}}}-\frac{\partial^2}{\partial{x''_{\uparrow}}\partial{x'_{\downarrow}}},\end{eqnarray}
and $\Omega_y$ and $\Omega_z$ are defined analogously. 
This is followed by the substitution of $x$ for $x'$ and $x''$. If several operators $\Omega_i$ are applied, all of the involved variables are substituted. For example the application of $\Omega_x\Omega_z$ is followed by the substitution $x$ for $x'$ and $x''$ and $z$ for $z'$ and $z''$.

If the result of a transvection is a new form, further transvections can be performed involving this form. If the result of a sequence of transvections is a scalar, this scalar is invariant under $G_{1,2,3}$.

As a shortform notation we denote a transvection of two forms $A$ and $B$ on one index, e.g. $i$, by $A_{ijk}B_{ilm}$, a transvection on
two indices, e.g. $i,j$, by $A_{ijk}B_{ijl}$, and so on. Note that $A_{ijk}B_{ilm}$ is a quadrilinear form with indices $j,k,l,m$ and $A_{ijk}B_{ijl}$ is a bilinear form with indices $k,l$.

From the $G_{1,2,3}$-invariants that can be constructed by Cayley's Omega Process  we select the subset which is also invariant under $G_{8,15}$. This is the subset where each monomial term in the polynomial has an equal number of indices of the four possible ones  $\downarrow$, $\uparrow$, $0$, and $\diamondsuit$. For example,
the monomial $m_{0\diamondsuit\uparrow}m_{\diamondsuit 0\downarrow}
m_{0\uparrow\diamondsuit}m_{\diamondsuit\downarrow 0}
m_{\uparrow 0\diamondsuit}m_{\uparrow\diamondsuit 0}m_{\downarrow\downarrow\uparrow}m_{\downarrow\uparrow\downarrow}$ is invariant under $G_{8,15}$ while $m_{\uparrow 0\diamondsuit}m_{\uparrow\diamondsuit 0}m_{\downarrow\downarrow\uparrow}m_{\downarrow\uparrow\downarrow}$ is not.
The requirement of invariance under $G_{8,15}$ restricts the degrees of the invariant polynomials to be multiples of four.

All the possible transvections of the collection $\{M,m_{12},m_{13},m_{21},m_{23},m_{31},m_{32}\}$ which are also invariant under $G_{8,15}$ were constructed for degrees 4, 8 and 12. A set of seven invariants from which all other invariants of these degrees can be constructed through multiplication and addition was selected. None of the invariants in this set can be constructed from the other elements of the set. This was seen by testing for linear dependence between all polynomials of degree 4, 8 or 12, respectively, that can be constructed by multiplication of the seven chosen invariants. 

A search for invariants of degree 16 did not yield any invariants that could not be constructed from the lower degree invariants. No search was conducted for higher degrees.

Of the seven invariants chosen as generators, the two degree four invariants $I_1$ and $I_2$ are given by

\begin{align}
I^{(1)}&=M_{ijk}m_{31i}m_{12j}m_{23k},\nonumber\\
I^{(2)}&=M_{ijk}m_{21i}m_{32j}m_{13k}.
\end{align}
The three degree eight invariants $I_{BC},I_{AC}$ and $I_{AB}$ are given by 
\begin{align}
I_{BC}&=m_{21i}M_{ijk}M_{ljk}m_{31l}m_{12n}m_{32n}m_{13p}m_{23p},\nonumber\\
I_{AC}&=m_{32j}M_{ijk}M_{ilk}m_{12l}m_{21n}m_{31n}m_{13p}m_{23p},\nonumber\\
I_{AB}&=m_{23k}M_{ijk}M_{ijl}m_{13l}m_{12n}m_{32n}m_{21p}m_{31p}.
\end{align}
Finally the two degree twelve invariants $I_{ABC}^{(1)}$ and $I_{ABC}^{(2)}$ are

\begin{widetext}
\begin{align}
I_{ABC}^{(1)}&=m_{23k}M_{ijk}M_{ijl}M_{npl}m_{31n}m_{12p}m_{31q}m_{21q}m_{32r}m_{12r}m_{23s}m_{13s},\nonumber\\
I_{ABC}^{(2)}&=m_{13k}M_{ijk}M_{ijl}M_{npl}m_{21n}m_{32p}m_{31q}m_{21q}m_{32r}m_{12r}m_{23s}m_{13s}
.\end{align}
\end{widetext}

\subsubsection{Partial localization}

In the case of the partial localization described in Sect. \ref{partloc} all states of the system can be represented by the subset
$\{M,m_{21},m_{31}\}$ of the forms. The unique degree four polynomial SLOCC invariant in Eq. \ref{kry} can be constructed from these forms as

\begin{eqnarray}
I_{A}^{(1)}=m_{21i}M_{ijk}M_{ljk}m_{31l}.
\end{eqnarray}
The degree eight invariant can be constructed as

\begin{eqnarray}
I_{A}^{(2)}=m_{21i}M_{ijk}M_{ljk}M_{lnp}M_{qnp}m_{31q}m_{21r}m_{31r},
\end{eqnarray}
or alternatively as
\begin{eqnarray}
I_{A}^{(2)}=\frac{1}{2}M_{ijk}M_{ljk}M_{lnp}M_{inp}m_{21q}m_{31q}m_{21r}m_{31r},
\end{eqnarray}
where $\frac{1}{2}M_{ijk}M_{ljk}M_{lnp}M_{inp}$ is the three-tangle \cite{coffman}, i.e., the hyperdeterminant of $M$. 
Analogously, for the cases where a fermion is localized with B or C, invariants can be constructed from the subsets $\{M,m_{12},m_{32}\}$ and $\{M,m_{13},m_{23}\}$, respectively.

When one fermion is localized with party A and the interaction of the two delocalized fermions is constrained to be strongly attractive the possible states of the system are described by the subset
$\{m_{21},m_{31}\}$, and all SLOCC invariants are generated by the polynomial

\begin{eqnarray}
I_{A}^{L}=m_{21r}m_{31r}.
\end{eqnarray}
Analogously, if the localized fermion is in mode B or C, a single invariant can be constructed from the subset $\{m_{12},m_{32}\}$ or $\{m_{13},m_{23}\}$, respectively.


\begin{thebibliography}{99}

\bibitem{epr}A. Einstein, B. Podolsky, and N. Rosen,
Phys. Rev. {\bf 47}, 777 (1935).

\bibitem{Bell}J. S. Bell, Physics {\bf 1}, 195 (1964).

\bibitem{bennett1} C. H. Bennett, G. Brassard, C. Cr\' epeau, R. Jozsa,
A. Peres, and W. K. Wootters, Phys. Rev. Lett. {\bf 70}, 1895 (1993).

\bibitem{banuls}M-C. Ba\~{n}uls, J. I. Cirac, and M. Wolf, Phys. Rev. A {\bf 76}, 022311 (2007).

\bibitem{mahler}J. Schlienz and G. Mahler, Phys. Lett. A {\bf 224}, 39 (1996). 
\bibitem{wootters} W. K. Wootters, Phys. Rev. Lett. {\bf 80}, 2245 (1998).

\bibitem{popescu2} H. A. Carteret, N. Linden, S. Popescu, and A. Sudbery, Found. Phys. {\bf 29}, 527 (1999).

\bibitem{coffman} V. Coffman, J. Kundu, and W. K. Wootters, Phys. Rev.
A {\bf 61}, 052306 (2000).

\bibitem{dur}W. D\"ur, G. Vidal, and J. I. Cirac,
Phys. Rev. A {\bf 62}, 062314 (2000).
\bibitem{carteret}H. A. Carteret and A. Sudbery,  J. Phys. A: Math. Gen. {\bf 33}, 4981 (2000).

\bibitem{acin2} A. Ac\'\i n, A. Andrianov, L. Costa, E. Jan\' e, J. I. Latorre, and R.
Tarrach, Phys. Rev. Lett.
{\bf 85}, 1560 (2000).

\bibitem{sudbery} A. Sudbery, J. Phys A: Math. Gen. {\bf 34}, 643 (2001).
\bibitem{acin}A. Ac\'\i n, A. Andrianov, E. Jan\' e, and R. Tarrach, J. Phys A: Math. Gen. {\bf 34}, 6725 (2001).



\bibitem{verstraete2002} F.  Verstraete,  J.  Dehaene,  B.  De  Moor,  and  H.  Verschelde, Phys. Rev. A {\bf 65}, 052112 (2002).
\bibitem{luquethibon1} J.-G. Luque and J.-Y. Thibon,
Phys. Rev. A {\bf 67}, 042303 (2003).

\bibitem{Osterlohsiewert2005} A. Osterloh and J. Siewert, Phys. Rev. A
{\bf 72}, 012337 (2005).

\bibitem{gour} G. Gour and N. R. Wallach, New. J. Phys. {\bf 13}, 073013 (2011). 

\bibitem{kraus}J. I. de Vicente, C. Spee, and B. Kraus, Phys. Rev. Lett. {\bf 111}, 110502 (2013).

\bibitem{vonneumann}J. von Neumann, {\it Mathematische Grundlagen der Quantenmechanik}
(Springer, Berlin, 1932).

\bibitem{horodecki}R. Horodecki and P. Horodecki, Phys. Lett.  A
{\bf 197}, 147 (1994).  

\bibitem{linden}N. Linden and S. Popescu, Fortschr. Phys. {\bf 46}, 567 (1998).
\bibitem{grassl}M. Grassl, M. R\" otteler, and T. Beth, Phys. Rev. A {\bf 58}, 1833 (1998).



\bibitem{gisin}N. Gisin and A. Peres, Phys. Lett. A {\bf 162}, 15 (1992).
\bibitem{popescu3}S. Popescu and D. Rohrlich, Phys. Lett. A {\bf 166}, 293 (1992).


\bibitem{bennett}C. H. Bennett, S. Popescu, D. Rohrlich, J. A. Smolin, and A. V. Thapliyal,
Phys. Rev. A {\bf 63}, 012307 (2000).

\bibitem{mcdonalds}J. Schliemann, D. Loss, and A. H. MacDonald,
Phys. Rev. B {\bf 63}, 085311 (2001).

\bibitem{knill}E. Knill, R. Laflamme, and L. Viola, Phys. Rev. Lett. {\bf 84}, 2525 (2000).



\bibitem{zanardi2}P. Zanardi, Phys. Rev. Lett. {\bf 87}, 077901 (2001).

\bibitem{zanardi}P. Zanardi, Phys. Rev. A {\bf 65}, 042101 (2002).

\bibitem{fisher}J. R. Gittings and A. J. Fisher,
Phys. Rev. A {\bf 66}, 032305 (2002).

\bibitem{zanardi3}P. Zanardi, D. A. Lidar, and S. Lloyd,
Phys. Rev. Lett. {\bf 92}, 060402 (2004).


\bibitem{barnumviola}H. Barnum, E. Knill, G. Ortiz, R. Somma, and L. Viola,
Phys. Rev. Lett. {\bf 92}, 107902 (2004).


\bibitem{johannesson}H. Johannesson and D. Larsson, Phys. Rev. A {\bf 73}, 042320 (2006).



\bibitem{shliemancirackus}J. Schliemann, J. I. Cirac, M. Ku\'{s}, M. Lewenstein, and D. Loss, Phys. Rev. A {\bf 64}, 022303 (2001).



\bibitem{vrana}P. L\'{e}vay and P. Vrana,
Phys. Rev. A {\bf 78}, 022329 (2008).





\bibitem{moretti}V. Moretti, {\it Spectral Theory and Quantum Mechanics with an introduction to the Algebraic Formulation} (Springer, Milan, 2012), chapter 7.4.5.


\bibitem{yurke}B. Yurke and D. Stoler, Phys. Rev. 
A {\bf 46}, 2229 (1992). 


\bibitem{samuel}P. Samuelsson, E. V. Sukhorukov, and M. B\"uttiker,
Phys. Rev. Lett. {\bf 92}, 026805 (2004).

\bibitem{neder}I. Neder, N. Ofek, Y. Chung, M. Heiblum, D. Mahalu, and V. Umansky, Nature {\bf 448}, 333 (2007).



\bibitem{kk}K. Gottfried and T.-M. Yan, {\it Quantum Mechanics: Fundamentals} (Springer-Verlag New York, New York, 2003).




\bibitem{hall}B. C. Hall, {\it Lie Groups, Lie Algebras, and Representations: An Elementary Introduction} (Springer, Cham, 2015).

\bibitem{pff}W. Pfeifer, {\it The Lie Algebras su(N): An Introduction} (Springer, Basel, 2003).









\bibitem{wolf}M. M. Wolf, Phys. Rev. Lett. {\bf 96}, 010404 (2006).


\bibitem{olver}P. J. Olver, {\it Classical Invariant Theory} (Cambridge University Press, Cambridge, 1999).

\bibitem{vidal}G. Vidal, J. Mod. Opt. {\bf 47}, 355 (2000).


 


\bibitem{verstraete2003} F. Verstraete, J. Dehaene, and B. De Moor
Phys. Rev. A { \bf 68}, 012103 (2003).

\bibitem{kempfness} G. Kempf and L. Ness, Lect. Notes Math.
{\bf 732}, 233 (1979).




\bibitem{dyson} F. J. Dyson and A. Lenard, J. Math. Phys. {\bf 8}, 423 (1967).

\bibitem{johansson15}M. Johansson, Phys. Rev. A {\bf 93}, 022328 (2016).



\bibitem{lyra}J. Torrico, M. Rojas, M. S. S. Pereira, J. Stre\v{c}ka, and M. L. Lyra,
Phys. Rev. B {\bf 93}, 014428 (2016).

\bibitem{mukuda}H. Mukuda, Y. Kitaoka, S. Ishiwata, T. Saito, Y. Shimakawa, H. Harima, and M Takano, 
J. Phys. Soc. Jpn. {\bf 75}, 094715 (2006).

\bibitem{ishiwata}S. Ishiwata, I. Terasaki, F. Ishii, N. Nagaosa, H. Mukuda, Y. Kitaoka, T. Saito, and M. Takano,
Phys. Rev. Lett. {\bf 98}, 217201 (2007).

\bibitem{ishiwata2}S. Ishiwata, M. Lee, Y. Onose, N. P. Ong, M. Takano, and I.
Terasaki, J. Magn. Magn. Mater. {\bf 310}, 1989 (2007). 

\bibitem{carter}S. A. Carter, B. Batlogg, R. J. Cava, J. J. Krajewski, W. F. Peck
Jr., and T. M. Rice,
Phys. Rev. Lett. {\bf 77}, 1378 (1996).
\bibitem{hiroi} Z. Hiroi, S. Amelinckx, G. Van Tendeloo, and N. Kobayashi,
Phys. Rev. B {\bf 54}, 15849 (1996).

\bibitem{sachdev}S. Sachdev, { \it Quantum Phase Transitions} (Cambridge University Press, Cambridge, 2001), p. 8.

\bibitem{essler}F. H. L. Essler, H. Frahm, F. G\"{o}mann, A. Kl\"{u}mper, and V. E. Korepin, { \it The One-Dimensional Hubbard Model} (Cambridge University Press, Cambridge, 2005).


\bibitem{mathematica}Wolfram Research, Inc., Mathematica, Version 10.0, Champaign, IL (2014).

\bibitem{ltt} J.-G. Luque, J.-Y. Thibon, and F. Toumazet,
Math. Struct. Comp. Sc. {\bf 17}, 1133 (2007).

\end{thebibliography}
\end{document}